\documentclass[reqno,xcolor=dvipsnames]{amsart}
\usepackage[dvipsnames]{xcolor}

\usepackage{geometry}
\geometry{a4paper,hcentering,vcentering,outer=2.8cm,top=3cm}

\usepackage[round,comma]{natbib}                

\usepackage{enumerate}
\usepackage{mathabx}

\usepackage{amsmath}
\usepackage{amsthm}
\usepackage{amsfonts}
\usepackage{amssymb}
\usepackage{dsfont}

\usepackage{nicefrac}
\usepackage{booktabs}
\usepackage{mathrsfs}
\usepackage{bm} 
\usepackage{mathtools}                          




\newcommand{\ccB}{{\mathscr B}}

\newcommand{\ccF}{{\mathscr F}}\newcommand{\cF}{{\mathcal F}}

\newcommand{\ccP}{{\mathscr P}}
\newcommand{\ccE}{{\mathscr E}}

\newcommand{\ccN}{{\mathscr N}}\newcommand{\cN}{{\mathcal N}}
\newcommand{\cS}{{\mathcal S}}

\newcommand{\cT}{{\mathcal T}}



\newcommand{\im}{{\rm i}}

\DeclareMathOperator{\Var}{Var}
\DeclareMathOperator{\Cov}{Cov}

\newcommand{\Ind}{{\mathds 1}}
\newcommand{\ind}[1]{\Ind_{\{#1\}}}

\newcommand{\R}{\mathbb{R}}

\newcommand{\FF}{\mathbb{F}}

\newcommand{\N}{\mathbb{N}}

\newcommand{\C}{\mathbb{C}}

\renewcommand{\cF}{\ccF}

\newcommand{\Lloc}{L^2_{\rm loc}}

\newtheorem{theorem}{Theorem}[section]
\newtheorem{corollary}[theorem]{Corollary}      
\newtheorem{lemma}[theorem]{Lemma}              
\newtheorem{proposition}[theorem]{Proposition}  

\theoremstyle{definition}
\newtheorem{example}[theorem]{Example} 
\newtheorem{definition}[theorem]{Definition} 
\newtheorem{remark}[theorem]{Remark}
\newtheorem{assumption}[theorem]{Assumption}

\numberwithin{equation}{section}

\setlength{\marginparwidth}{2.5cm}

\usepackage[backgroundcolor=white,bordercolor=orange]{todonotes}




 
\usepackage{changes}      
\definechangesauthor[name = Thorsten Schmidt, color=PineGreen]{TS}

\usepackage[utf8x]{inputenc}
\usepackage{lmodern,textcomp}

\renewcommand{\P}{P}

\newenvironment{prooflemma}[1]{{\parindent 0pt \it Proof of Lemma #1.}}{\mbox{}\hfill\mbox{$\Box\hspace{-0.5mm}$}\vskip 16pt}

\newcommand{\dbra}[1]{[\kern-0.15em[ #1 ]\kern-0.15em]}
\newcommand{\dbraco}[1]{[\kern-0.15em[ #1 [\kern-0.15em[}

\newcommand{\dinttT}{\int_{(t,T]^2}}
\usepackage[colorlinks,urlcolor=black,citecolor=blue,linkcolor=red]{hyperref}

\begin{document}

\title{Term structure modelling with overnight rates beyond stochastic continuity}

	\author{Claudio Fontana}
	\address{University of Padova, Department of Mathematics ``Tullio Levi-Civita'', via Trieste 63, 35121 Padova, Italy.}
	\email{fontana@math.unipd.it}
		\author{Zorana Grbac}
		\address{Paris Cit\'{e} University, Laboratoire de Probabilit\'es, Statistique et Mod\'elisation, 
		8 Pl. Aur\'elie Nemours, 75013 Paris, France.}
    \email{grbac@lpsm.paris}
		\author{Thorsten Schmidt}
		\address{Albert-Ludwigs University of Freiburg, Ernst-Zermelo Str. 1, 79104 Freiburg, Germany.}
    \email{ts@stochastik.uni-freiburg.de}
    \thanks{{\em JEL classification}: C02, C60, E43, G12, G13.\\
    \indent{\em 2020 Mathematics Subject Classification}: 
    60G15, 60G44, 60G57, 91G15, 91G20, 91G30.\\
    \indent The authors are thankful to two anonymous reviewers for constructive comments that helped to improve the paper. Financial support from the University of Padova (research programme BIRD190200/19, ``Term Structure Dynamics in Interest Rate and Energy Markets'') and the Europlace Institute of Finance is gratefully acknowledged.
    Data sharing not applicable to this article as no datasets were generated or analysed during the current study.}
    \keywords{Libor reform, alternative risk-free rate, SOFR, SONIA, €STR, stochastic discontinuities, affine processes, semimartingales, hedging, local risk-minimization.}
    \date{\today. }

\maketitle

\begin{abstract}
Overnight rates, such as the SOFR (Secured Overnight Financing Rate) in the US, are central to the current reform of interest rate benchmarks.
A striking feature of overnight rates is the presence of jumps and spikes occurring at predetermined dates due to monetary policy interventions and liquidity constraints. This corresponds to stochastic discontinuities (i.e., discontinuities occurring at ex-ante known points in time) in their dynamics.
In this work, we propose a  term structure modelling framework based on overnight rates and characterize absence of arbitrage in a generalised Heath-Jarrow-Morton (HJM) setting. 
We extend the classical short-rate approach to accommodate stochastic discontinuities, developing a tractable setup driven by affine semimartingales. 
In this context, we show that simple specifications allow to capture stylized facts of the jump behavior of overnight rates. In a Gaussian setting, we provide explicit valuation formulas for bonds and caplets. 
Furthermore, we investigate hedging in the sense of local risk-minimization when the underlying term structures feature stochastic discontinuities. 
\end{abstract}

\section{Introduction}

The discontinuation of the publication of Libor rates for the majority of currencies and tenors on January 1, 2022, and the cessation of the US dollar Libor panel on June 30, 2023, mark a major transition for interest rate markets.\footnote{See the FCA announcement on cessation and loss of representativeness of the Libor benchmarks (\url{www.fca.org.uk/publication/documents/future-cessation-loss-representativeness-libor-benchmarks.pdf}) and the FCA decision on the synthetic USD-indexed Libor (\url{www.fca.org.uk/publication/feedback/fs23-2.pdf}).}
In the reform of interest rate benchmarks, {\em overnight rates} play a central role, such as SOFR (secured overnight financing rate) in the US, SONIA (Sterling overnight index average) in the UK and €STR (Euro short-term rate) in the Euro zone, sometimes generically referred to as {\em risk-free rates} (RFRs).

A distinctive feature of overnight rates is the presence of \emph{stochastic discontinuities} in their dynamics: jumps and spikes occurring at predetermined dates or at regular intervals, as a result of monetary policy interventions as well as regulatory and liquidity constraints. In particular, overnight rates tend to  jump in correspondence with meetings of the monetary policy authority, and these meetings usually follow a set schedule.
This is confirmed by the analysis in  \cite{BackwellHayes21} documenting that most of the variation in the SONIA rate over the years 2016-2020 occurs in correspondence with the meeting dates of the Monetary Policy Committee of the Bank of England. 
The recent analysis in \cite{schlogl2023term} provides evidence of a similar phenomenon for SOFR, highlighting the importance of modelling scheduled jumps that coincide with the Federal Open Market Committee (FOMC) meeting dates.

In this work, the predetermined dates at which the overnight rate (and, potentially, forward rates) is expected to exhibit discontinuities will be called {\em expected jump dates} and denoted by $\cS=\{s_1,\ldots,s_M\}$. 
In the case of SOFR the presence of expected jump dates is well illustrated by Figure~\ref{fig1}. As a particular example,  consider the spike observed on September 17, 2019. According to \cite{fed2019}, 
	``Strains in money markets in September seem to have originated from routine market events, including a corporate tax payment date and Treasury coupon settlement. The outsized and unexpected moves in money market rates were likely amplified by a number of factors''.
The analysis of \cite{fed2019} suggests that the date of this spike was known well in advance (a corporate tax payment date coinciding with a Treasury coupon settlement), while the magnitude of the jump was obviously not predictable.

\begin{figure*}
\begin{center}
	\includegraphics[width=10cm]{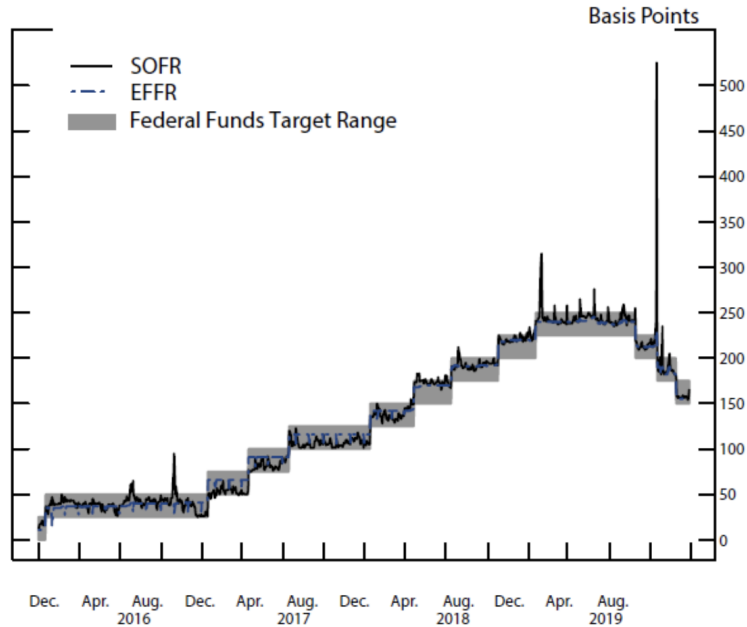}
\end{center}
	\caption{SOFR time series from 01/12/2015 until 30/09/2019. Spikes and jumps occurring at regular points in time are clearly visible. Source: \cite{fed2019}.}\label{fig1}
\end{figure*}

Starting from these observations, we develop  a general framework for interest rate markets in the presence of overnight rates. A first key point is that the natural choice for the num\'eraire asset in this context corresponds to a rolled-over investment in the overnight rate, according to a schedule $\cT=\{t_1,t_2,\ldots\}$ of {\em roll-over dates}. The num\'eraire can therefore exhibit jumps in correspondence with such roll-over dates, which represent an additional source of stochastic discontinuities, besides the expected jump dates mentioned above. Typically, the roll-over dates will be quite frequent, while expected jump dates  will be less frequent, and it is not excluded that some expected jump dates coincide with roll-over dates.
In this context, building on the results  of   \cite{fontana2020term}, we extend the HJM approach by allowing for two different types of stochastic discontinuities. We characterize absence of arbitrage by means of generalized drift conditions, with specific no-arbitrage restrictions related to the stochastic discontinuities.

As a second main contribution, we develop a tractable class of models based on affine semimartingales, i.e.\ affine processes which allow for  stochastic discontinuities (see \cite{keller-ressel2019}). We show that affine semimartingale models for an overnight rate provide a natural extension of classical affine short-rate models to the case of stochastic discontinuities. As illustrated by a simple example, this class of models allows reproducing several stylized features of overnight rates, in particular spikes and jumps at fixed times. Towards practical applications, we derive explicit pricing formulas for bonds and caplets in an extended Hull-White model with discontinuities.

Finally, we study the hedging of derivatives related to risk-free rates (or, more generally, derivatives written on Libor fallbacks determined by RFRs). The presence of stochastic discontinuities induces market incompleteness and, therefore, we resort to local risk-minimization. We show that the locally risk-minimizing strategy admits a  decomposition into two components: a dynamic continuous-time strategy representing the delta-hedging strategy, and an additional component that optimally rebalances the portfolio in correspondence to the expected jump dates.
We exemplify this result by considering the problem of hedging a SOFR-caplet by trading in a SOFR futures contract, the most liquidly traded contract written on SOFR at the time of writing.

\subsection*{Related literature}

The reform of interest rate benchmarks is receiving considerable attention and, therefore, we limit our review of the literature to some contributions that are specifically related to our work, referring to \cite{Henrard19}, \cite{Piterbarg20}, \cite{KlingleraSyrstadb21} and \cite{HugginsSchaller2022} for a general analysis of the challenges of the Libor reform. One of the earliest and most influential contributions is \cite{MercurioLyashenko19}, based on an extension of the Libor market model. Classical short-rate models have been revisited in the post-Libor universe by several authors, starting from \cite{Mercurio18}.
Several recent works employ the classical Hull-White model (see, e.g., \cite{Hofman20}, \cite{Turfus20}, \cite{Hasegawa21}).
Always in a short-rate setup, \cite{SkovSkovmand20}  propose a multi-factor Gaussian model in order to analyze SOFR futures, while \cite{Fontana22} develops a model driven by general affine processes in view of pricing applications. \cite{rutkowski2021pricing} adopt a Vasi\v cek model for SOFR and other reference rates and study the hedging of SOFR-based derivatives, also in the presence of funding costs and collateralization. 
A different approach is taken by  \cite{MacrinaSkovmand20}, who adopt a linear rational model for the savings account and derive several pricing formulas. We also mention \cite{Willems20}, where an extended SABR model is applied to the pricing of caplets in the post-Libor setup.

The papers mentioned in the previous paragraph do not take into account  stochastic discontinuities in the dynamics of overnight rates. To the best of our knowledge, the earliest works acknowledging this fact are \cite{Piazzesi2001, Piazzesi2005}. 
In \cite{KimWright}, a term structure model with jumps occurring in correspondence of macroeconomic announcements dates is presented. In recent works, 
stochastic discontinuities are playing an increasingly important role.
In particular, \cite{AndersenBang20} develop a model that can generate spikes in the SOFR dynamics, both at totally inaccessible times and at anticipated times. \cite{GellertSchloegl21} and \cite{BraceGellertSchloegl22} show that a diffusive HJM model for instantaneous forward rates is compatible with the presence of jumps/spikes at fixed times in the short rate, consistently with the empirical evidence on SOFR.
In \cite{BackwellHayes21}, the SONIA rate is modelled via a short-rate approach by relying on a pure jump process with both unexpected and predetermined jump dates. 
\cite{schlogl2023term} make use of a short-rate affine jump-diffusion framework to provide a  model which is able to jointly fit the overnight US policy rate, SOFR and SOFR futures rates.
\cite{harju2023target} models US overnight rates, such as the effective Fed funds rate and the USD overnight Libor, in the setup of short rate models, also including jumps at fixed times.
Finally, we mention that stochastic discontinuities also play an important role in credit markets (see \cite{GehmlichSchmidt2018} and \cite{FontanaSchmidt2018}), while a general framework for multi-curve markets with stochastic discontinuities is developed in \cite{fontana2020term}.

\subsection*{Structure of the paper} 

In Section \ref{sec2}, we present a general view on post-Libor interest rate markets based on overnight rates. In Section \ref{sec3}, we develop a modelling framework based on the HJM approach extended to the case of stochastic discontinuities. In Section \ref{affine}, we introduce a model based on affine semimartingales for overnight rates and provide explicit valuation formulas in an extended Hull-White model. Finally, in Section \ref{hedging} we study locally risk-minimizing hedging strategies in the presence of stochastic discontinuities. Some of the more technical proofs are postponed to the Appendix.

\section{Interest rate markets with overnight rates} \label{sec2}

In this section, we give a fundamental description of an interest rate market in the presence of overnight rates. We replace the classical assumption of the existence of a savings account generated by an instantaneous short rate by a more general structure, which in particular allows for stochastic discontinuities.
The first key point of our analysis is therefore a systematic study of the implications of a num\'eraire with stochastic discontinuities generated by overnight investment.

\subsection{The num\'eraire}
\label{sec:numeraire}

The num\'eraire asset obtained by investing according to an overnight rate is generated by a roll-over strategy and as such is piecewise constant, being updated at every roll-over date. Those dates are predetermined (hence, deterministic) and we denote them by $t_1<t_2<\cdots$, usually corresponding to business days. We call those dates \emph{roll-over dates}, as mentioned in the introduction, and collect them in the set $\cT:=\{t_n \colon n\in\N\}$. Moreover, we set $t_0:=0$. 

In this situation, the value at time $t\geq0$ of the overnight num\'eraire takes the following form:
\begin{equation}	\label{eq:discrete_num}
\prod_{t_{n+1}\leq t}\bigl(1+r_{t_n}(t_{n+1}-t_n)\bigr),
\end{equation}
with $r_{t_n}$ representing the overnight rate applicable to the time period $[t_n,t_{n+1}]$.  

To allow for greater generality, and in particular to include the classical framework into our setting (see Remark \ref{rem:numeraire} below), we assume that the num\'eraire process $S^0$ is given by
\begin{equation} \label{eq:ass1}
S^0_t := \exp\left( \int_0^t \rho_u \, \eta(du) \right), 
\qquad \text{ for all }t \ge 0,
\end{equation}
where $\eta$ is a Borel measure on $\R_+$ with the following structure:
\begin{align} \label{eq:mu}
\eta(du) = du + \sum_{n\in\N} \delta_{t_n}(du).
\end{align}
Here $\delta_{t_n}$ denotes the Dirac measure in $\{t_n\}$ and $\rho$ is an adapted process satisfying $\int_0^T|\rho_t|\eta(dt)<\infty$ a.s., for all $T>0$. We will refer to $\rho$ as the {\em risk-free rate} (RFR) process.

\begin{remark}	\label{rem:numeraire}
(i) A discretely updated num\'eraire of the form \eqref{eq:discrete_num} can be easily recovered from \eqref{eq:ass1} by taking $\rho_{t_n}=\log(1+r_{t_{n-1}}(t_n-t_{n-1}))$, for all $n\in\N$, and setting $\rho\equiv0$ outside of the set $\cT$.

(ii) Classical interest rate models without stochastic discontinuities are based on the assumption of an absolutely continuous num\'eraire $S^0$. This case can be recovered by simply setting $\cT=\emptyset$, thus yielding $S^0_t=\exp(\int_0^t\rho_u\,du)$ in \eqref{eq:ass1}, with $\rho$ representing the risk-free short rate.
\end{remark}

\subsection{Notions of interest rates}
\label{sec:def_rates}

The second key point in our analysis are the new features of post-Libor markets. We therefore introduce several notions of interest rates which are important in such  markets, relying mostly on \cite{MercurioLyashenko19}.

We denote by $P(t,T)$ the price at date $t$ of a zero-coupon bond with unit payoff at maturity $T \ge t$. In the following, we will refer to zero-coupon bonds simply as bonds. 

\subsubsection{The backward-looking rate}
\label{sec:back}

For each $0\leq S<T$, the {\em setting-in-arrears rate} $R(S,T)$ is the rate that is achieved over the period $[S,T]$ by a rolled-over investment according to the roll-over dates $\cT$. This yields the rate
\begin{equation}	\label{eq:RT_i}
R(S,T) := \frac{1}{T-S} \bigg( \prod_{n\in N(S,T)} \frac{1}{P(t_n,t_{n+1})}-1\bigg),
\end{equation}
where $N(S,T):=\{n\in\N \cup \{0\} : S \leq t_n \text{ and } t_{n+1}\leq T\}$ denotes the set of indices of the roll-over dates $t_n$ for which the interval $[t_n,t_{n+1}]$ is completely contained in $[S,T]$. 

The setting-in-arrears rate $R(S,T)$ is said to be {\em backward-looking}, since its value can be determined only at the end of the accrual period $[S,T]$ and not at its beginning (as it would be the case for a forward rate).
Backward-looking rates play a central role in post-Libor markets, having been adopted as the reference fallback rates for most Libor-based contracts and transactions. 

The generality of our setup enables us to work with the exact definition \eqref{eq:RT_i} of the setting-in-arrears rate regardless of the structure of $S^0$ (in particular, also when $S^0$ is absolutely continuous).

\begin{remark}[Relation to the num\'eraire]
The overnight rate $r_{t_n}$ mentioned in \eqref{eq:discrete_num} can be obtained from bond prices via $1+r_{t_n}(t_{n+1}-t_n)=1/P(t_n,t_{n+1})$. Hence, if $S^0$ is the overnight num\'eraire given in equation \eqref{eq:discrete_num}, then the setting-in-arrears rate $R(S,T)$ can be directly written in terms of $S^0$ as
\begin{equation}	\label{eq:SIA_num}
R(S,T) = \frac{1}{T-S} \bigg( \frac{S^0_T}{S^0_S} -1 \bigg).
\end{equation}
In the literature (see, e.g., \cite{MercurioLyashenko19} and \cite{SkovSkovmand20}), the num\'eraire $S^0$ is usually assumed to be absolutely continuous (see part (ii) of Remark \ref{rem:numeraire}) and equation \eqref{eq:SIA_num} is adopted as an approximation of the setting-in-arrears rate $R(S,T)$. As mentioned above, this approximation is not necessary in our setting with stochastic discontinuities.
\end{remark}

\subsubsection{The forward-looking rate}
\label{sec:fwd}

The {\em forward-looking rate} $F(S,T)$ is defined as the rate $K$ that makes equal to zero the market value at time $S$ of the payoff $(T-S)(R(S,T)-K)$ delivered at maturity $T$, for $0\leq S<T$. Such an agreement is called a {\em single-period swap}. 
In contrast to the backward-looking rate, the forward-looking rate $F(S,T)$ is determined at the beginning of the accrual period.

\begin{remark}[On the notion of forward-looking rate]
\label{CMErate}
The most widely adopted forward-looking rate is the {\em CME term SOFR rate}, which has been approved by the Alternative Reference Rates Committee (ARRC) in 2021 for use in cash products and, with some restrictions, derivatives. Moreover, since June 30, 2023, the CME term SOFR rates (plus the respective ISDA fixed spread adjustment) are used for calculation of the temporary synthetic 1-, 3- and 6-month USD-indexed Libor rates meant to facilitate the transition  of the contracts that reference USD-indexed Libor (see the footnote on page 1). 
In theory, forward-looking rates should be determined as discussed above from market quotes of overnight index swaps (OIS), of which single-period swaps are the basic building blocks. However, the CME term SOFR rate is currently determined with a specific methodology, based on the work of \cite{HeitfieldPark2019}, that relies on market quotes of SOFR futures. This is due to the fact that liquidity in SOFR OIS is not deemed sufficient, while SOFR futures are traded in much larger volumes.\footnote{CME has however announced that SOFR OIS will be used in the calculation of term SOFR rates once their volume will exceed 25\% of the volume of SOFR futures (see \url{https://www.cmegroup.com/market-data/files/cme-term-sofr-reference-rates-benchmark-methodology.pdf}).} Using futures prices to compute forward rates is a model-dependent procedure and also incurs into the issue of convexity adjustments. Therefore, in the current market environment of imperfect liquidity, it may happen that the CME term SOFR rate is not perfectly aligned with the forward-looking rate derived from SOFR swap quotes. 
In this paper, we shall not consider this issue, which will be addressed in a separate work.
\end{remark}

\subsubsection{The forward term rate}	\label{sec:fwd_rates}

Single-period swaps can be considered the basic contracts written on backward-looking and forward-looking rates in post-Libor markets, analogously to forward rate agreements in classical interest rate markets.  
For $0\leq S<T$ and $t\in[0,T]$, the {\em backward-looking forward rate} $R(t,S,T)$ is defined as the rate $K$ that makes equal to zero the value at time $t$ of a single-period swap delivering the payoff $(T-S)(R(S,T)-K)$ at maturity $T$.
Note that, differently from the classical concept of a forward rate, the backward-looking forward rate 
$R(t,S,T)$ is also defined inside the accrual period (i.e., for $t \in [S,T]$), due to the backward-looking nature of $R(S,T)$. 

In an analogous way, we define the {\em forward-looking forward rate} $F(t,S,T)$ for any $t\in[0,S]$ as the rate $K$ that makes equal to zero the value at time $t$ of a single-period swap delivering the payoff $(T-S)(F(S,T)-K)$ at maturity $T$.
Clearly, the forward-looking forward rate satisfies $F(S,S,T)=F(S,T)$, where $F(S,T)$ is the forward-looking rate introduced above.

Comparing the notions of backward-looking and forward-looking forward rate, we notice that
\begin{equation}\label{eq:equality_rates}
R(t,S,T) = F(t,S,T),
\qquad\text{ for all }t\in[0,S].
\end{equation}
However, while the forward-looking forward rate $F(\cdot,S,T)$ stops evolving at time $S$, the backward-looking forward rate $R(\cdot,S,T)$ continues to evolve until time $T$, when it reaches the terminal condition $R(T,S,T) = R(S,T)$.
Identity \eqref{eq:equality_rates}, which has been first pointed out in  \cite{MercurioLyashenko19}, therefore implies that backward-looking forward rates and forward-looking forward rates can be consolidated into a single process $R(\cdot,S,T)$. In this work, we adopt this viewpoint and generically call the process $R(\cdot,S,T)$, considered in the whole time interval $[0,T]$,  the {\em forward term rate}.

\begin{remark}
[On the validity of  \eqref{eq:equality_rates}]
The discussion in Remark \ref{CMErate} implies that, if $F(S,T)$ is assumed to coincide with the CME term SOFR rate, then the identity $F(S,T)=R(S,S,T)$ may fail to hold in the current market environment, due to the CME term SOFR rate calculation methodology. In turn, this implies that violations to \eqref{eq:equality_rates} cannot be excluded a priori. 
Allowing for violations to \eqref{eq:equality_rates} would require to model $F(\cdot,S,T)$ separately from $R(\cdot,S,T)$, with the introduction of an intrinsic multi-curve dimension in the model. We will develop this aspect in a forthcoming work, where we will also provide a mathematical description of violations to \eqref{eq:equality_rates} in terms of the strict local martingale property of solutions to a suitable BSDE.
Here, howeover, we do not account for possible violations to \eqref{eq:equality_rates}, assuming implicitly that liquidity in SOFR swaps is sufficient for a robust determination of forward term rates, coherently with the viewpoint of \cite{MercurioLyashenko19}.
\end{remark}

It is important to note that forward term rates can be expressed in terms of bond prices. Indeed, in view of formula \eqref{eq:RT_i}, the backward-looking rate $R(S,T)$ can be replicated by holding a static position in a bond with maturity $S$ and investing the payoff $P(S,S)=1$ received at time $S$ into the overnight num\'eraire until time $T$. Since $R(S,T)$ represents the floating leg of a single-period swap referencing the backward-looking rate, this implies that the forward term rate can be written as 
\[
R(t,S,T) = \frac{1}{T-S}\left(\frac{P(t,S)}{P(t,T)}-1\right),
\qquad\text{ for all }0\leq t\leq T,
\]
where we extend the definition of bond prices  beyond maturity $S$ by setting
\[
P(t,S) := \frac{P(t,t_{n(t)})}{P(t_{n(t)-1},t_{n(t)})}\prod_{n\in N(S,t)}\frac{1}{P(t_n,t_{n+1})}
\qquad\text{ for }t>S,
\]
with $N(S,t):=\{n\in\N \cup \{0\} \colon S \leq t_n \text{ and } t_{n+1}\leq t\}$ and $n(t):=\inf\{n\in\N_0 \colon t_n> t\}$, for all $t\geq0$.

The analysis developed in this section shows that, together with the num\'eraire $S^0$, the family of bond prices $\{P(\cdot,T); \, T>0\}$ constitutes the fundamental basis of a term structure model for a post-Libor market as considered here. This is the approach that we are going to adopt and develop in the next section, highlighting the role of stochastic discontinuities.

\section{An extended Heath-Jarrow-Morton framework}
\label{sec3}

In this section, we develop a general term structure model based on overnight rates in the presence of stochastic discontinuities. The main result of this section is Theorem \ref{thm:NA}, which provides a set of necessary and sufficient conditions for the risk-neutral property of a given probability measure. 

We recall that, as mentioned in the introduction, in the considered interest rate market two different types of stochastic discontinuities arise naturally:
\begin{enumerate}
\item {\em roll-over dates} $\cT=\{t_1,t_2,\dots\}$, corresponding to the discontinuities in the num\'eraire process $S^0$, encoded in the atoms of the measure $\eta$ introduced in \eqref{eq:mu};
\item {\em expected jump dates} $\cS=\{s_1,\ldots,s_M\}$, representing a set of deterministic times at which the RFR process $\rho$ and forward rates are expected to exhibit jumps.
\end{enumerate}
We do allow for an overlap of these sets (i.e., $\cT\cap\cS\neq\emptyset$), meaning that stochastic discontinuities in the dynamics of the term structure of RFRs can occur simultaneously to some of the roll-over dates. 
In comparison to the roll-over dates in $\cT$, the expected jump dates in $\cS$ are much less frequent, so that we consider only a finite number of them.

\begin{remark}[Extension to predictable times]	\label{rem:pred_times}
The results of this section are also valid in the more general setting where $\cS$ is a countable family of {\em predictable times}, see \cite{FontanaSchmidt2018}. For simplicity of presentation and in order to treat $\cS$ with the same techniques used for $\cT$, we suppose that $\cS$ is a finite family of fixed dates.
\end{remark}

Let $(\Omega,\cF,\FF, Q)$ be a given stochastic basis, endowed with a filtration $\FF=(\cF_t)_{t\geq0}$ satisfying the usual conditions and supporting a $d$-dimensional Brownian motion $W=(W_t)_{t\geq0}$ and an integer-valued random measure $\mu(dt,dx)$ on $\R_+\times E$, with compensator $\nu(dt,dx)=\lambda_t(dx)dt$, where $\lambda_t(dx)$ is a kernel from $(\Omega\times\R_+,\ccP)$ into $(E,\ccB(E))$, with $\ccP$ denoting the predictable sigma-field on $\Omega\times\R_+$ and $(E,\ccB(E))$ a Polish space with its Borel sigma-field. The compensated random measure is denoted by $\tilde{\mu}(dt,dx):=\mu(dt,dx)-\nu(dt,dx)$.
We refer to \cite{JacodShiryaev} for all unexplained notions of stochastic calculus.

The analysis of Section \ref{sec:def_rates} shows that  the key ingredient of a term structure model in the post-Libor framework as considered here is the family of bond prices $\{P(\cdot,T);T>0\}$ together with  the num\'eraire $S^0$ defined in \eqref{eq:ass1}.
To introduce a general  modelling framework we consider an extension of the HJM approach allowing for discontinuous term structures. To this end, we assume that
\begin{equation}
\label{eq:HJM-bond price}
P(t,T) = \exp\bigg( - \int_{(t,T]} f(t,u) \eta(du) \bigg),
\qquad\text{ for all }0\leq t\leq T, 
\end{equation}
with the convention  $\int_{(T,T]} f(T,u) \eta(du) =0$, for all $T\geq0$. 
Moreover, we assume that the  \emph{instantaneous forward rates} $f(\cdot,T)$, for $T\geq0$, are given by
\begin{equation}
\label{eq:fwd_rate}
 f(t,T) = f(0,T) + \int_0^t \alpha(s,T) ds + \int_0^t \varphi(s, T) dW_s + \int_0^t\int_E\psi(s,x,T)\tilde{\mu}(ds,dx) + V(t, T),
\end{equation}	
for $0 \le t \le T$, 
where $V(\cdot,T)$ is a pure jump adapted process such that $\{\Delta V(\cdot,T)\neq0\}\subseteq\Omega\times\cS$. 

\begin{remark}
Integration with respect to the measure $\eta$ in \eqref{eq:HJM-bond price} is justified by the fact that, since the num\'eraire $S^0$ jumps in correspondence of the atoms of $\eta$, bond prices are expected to be discontinuous (in maturity) at those points. More precisely, absence of arbitrage implies that bond prices have necessarily to be of the form \eqref{eq:HJM-bond price} with respect to the same measure $\eta$ appearing in \eqref{eq:ass1}.
This fact has been first pointed out in \cite{GehmlichSchmidt2018}, albeit in the context of default modelling.
\end{remark}

To proceed further, we introduce the following technical requirements on \eqref{eq:fwd_rate}.

\begin{assumption}\label{ass}
The following conditions hold:
\begin{enumerate}[(i)]
\item the {\em initial forward curve} $T\mapsto f(0, T)$ is $(\cF_0\otimes\ccB(\R_+))$-measurable, real-valued and satisfies $\int_0^T|f(0, u)|\eta(du) <  \infty$, for all $T>0$,
\item the {\em drift process} $\alpha:\Omega\times\R^2_+\to\R$ is progressively measurable\footnote{This means that the map $\alpha(\cdot,\cdot)|_{[0,t]}:\Omega\times[0,t]\times\R_+\to\R$ is $(\cF_t\otimes\ccB([0,t])\otimes\ccB(\R_+))$-measurable, for all $t\in\R_+$.}, satisfies $\alpha(t,T)=0$ for  $0\leq T<t$, and
\[
\int_0^T\int_0^u|\alpha(s,u)|ds\,\eta(du)<\infty,
\qquad\text{ for all }T>0,
\]
\item the {\em diffusive volatility process} $\varphi:\Omega \times\R_+^2\to\R^d$ is progressively measurable, satisfies $\varphi(t,T)=0$ for  $0\leq T<t$, and
\[
\sum_{i=1}^d\int_0^T\left(\int_0^u|\varphi^i(s,u)|^2ds\right)^{1/2}\eta(du)<\infty,
\qquad\text{ for all }T>0,
\]
\item the {\em jump function} $\psi:\Omega\times\R_+\times E\times\R_+\to\R$ is a $(\ccP\otimes\ccB(E)\otimes\ccB(\R_+))$-measurable function satisfying $\psi(t,x,T)=0$ for $0\leq T<t$ and $x\in E$, and
\[
\int_0^T\int_E\int_0^T|\psi(s,x,u)|^2\eta(du)\lambda_s(dx) ds<\infty,
\]
\item the {\em stochastic discontinuity process} $V(\cdot,T)=(V(t,T))_{t\in[0,T]}$ satisfies $\int_0^T|\Delta V(s,u)|\eta(du)<\infty$ for all $s\in\cS$ and $\Delta V(t,T)=0$ for all $0\leq T<t$.
\end{enumerate}
\end{assumption}
 		
As a consequence of Assumption \ref{ass}, all integrals appearing in the forward rate equation \eqref{eq:fwd_rate} are well-defined for $\eta$-a.e. $T>0$. The integrability requirements in parts (ii)-(iii)-(iv) of Assumption \ref{ass} ensure that ordinary and stochastic Fubini theorems can be applied, in the versions of \cite[Theorem 2.2]{Veraar12} for Brownian integrals and \cite[Proposition A.2]{BMKR} for stochastic integrals with respect to the compensated random measure $\tilde{\mu}$. 

For all $0 \leq t \leq T$ and $x\in E$, we define
\begin{align*}
\bar \alpha(t,T)  &:= \int_{[t,T]}\alpha(t,u) \eta(du),\\ 
\bar \varphi(t,T)  &:= \int_{[t,T]}\varphi(t,u) \eta(du),\\ 
\bar \psi(t,x,T)  &:= \int_{[t,T]}\psi(t,x,u) \eta(du),\\ 
\bar V(t,T)  &:= \int_{[t,T]}  \Delta V(t,u) \eta(du).
\end{align*}

The probability measure $Q$ is a \emph{risk-neutral} measure if $P(\cdot,T)/S^0$ is a local martingale under $Q$, for every $T>0$.
The existence of a risk-neutral measure suffices to ensure absence of arbitrage, in the sense of {\em no asymptotic free lunch with vanishing risk} (NAFLVR, see \cite{CuchieroKleinTeichmann}), for the large financial market where bonds of all maturities are traded. In turn, this ensures the validity of NAFLVR in the post-Libor market described in Section \ref{sec:def_rates}.
We refer to \cite{KleinSchmidtTeichmann2015} and \cite[Section 6]{fontana2020term} for a detailed analysis of absence of arbitrage in interest rate markets.

In the proof of Theorem \ref{thm:NA} below, we will  use the following stochastic exponential representation of bond prices discounted by the num\'eraire $S^0$. 

\begin{lemma}	\label{lem:bond_stoch_exp}
 Suppose that Assumption \ref{ass} holds. Then, for every $T > 0$,
 \begin{align}
\label{eq:bond_prices_stoch_exp}
\begin{aligned}
\frac{P(\cdot, T)}{S^0} = P(0, T)\mathcal{E} \bigg(
& - \int_0^\cdot \bar{\alpha}(s, T) ds 
+ \frac{1}{2} \int_0^\cdot \|\bar{\varphi}(s, T)\|^2 ds \\
& - \int_0^\cdot \bar{\varphi}(s, T) dW_s 
- \int_0^\cdot \int_E \bar{\psi}(s,x,T)\tilde{\mu}(ds,dx) \\
& + \int_0^\cdot \int_E \bigl(e^{-\bar{\psi}(s,x,T)}-1 + \bar{\psi}(s,x,T)\bigr) \mu(ds,dx) + \int_0^\cdot \bigl(f(u, u)-\rho_u\bigr)du \\
&+\sum_{\tau\in\cS\cup\cT}\big(e^{-\bar{V}(\tau,T)\delta_{\cS}(\tau) + \left(f(\tau,\tau)-\rho_\tau\right)\delta_{\cT}(\tau)}  - 1 \big)    \Ind_{\dbraco{\tau,\infty}} \bigg).
\end{aligned}
\end{align}
\end{lemma}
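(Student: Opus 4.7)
The plan is to derive the semimartingale decomposition of $X_t := \log\bigl(P(t,T)/S^0_t\bigr) = -\int_{(t,T]}f(t,u)\,\eta(du) - \int_{[0,t]}\rho_s\,\eta(ds)$ and then invoke the standard identity $e^X = e^{X_0}\,\mathcal{E}(Y)$ with
\[
Y = (X-X_0) + \tfrac{1}{2}\langle X^c,X^c\rangle + \sum_{s\le\cdot}\bigl(e^{\Delta X_s} - 1 - \Delta X_s\bigr),
\]
reorganising the resulting expression into the announced form. Since $P(0,T)/S^0_0 = P(0,T)$, matching $Y$ to the exponent in \eqref{eq:bond_prices_stoch_exp} will conclude the proof.

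First, I would compute the dynamics of $G_t := \int_{(t,T]} f(t,u)\,\eta(du)$. Splitting $\eta = \mathrm{Leb}+\sum_i\delta_{t_i}$, I would treat the continuous piece $\int_t^T f(t,u)\,du$ and the finite sum $\sum_{i:t<t_i\le T} f(t,t_i)$ separately. Substituting the dynamics \eqref{eq:fwd_rate} of $f(\cdot,u)$ and applying the stochastic Fubini theorem (in the forms of \cite[Theorem 2.2]{Veraar12} for the Brownian integral and \cite[Proposition A.2]{BMKR} for the compensated random measure integral, whose hypotheses are guaranteed by parts (ii)--(iv) of Assumption \ref{ass}) to swap the $(s,u)$ orders of integration, the inner $u$-integrals collapse into $\bar{\alpha}(s,T)$, $\bar{\varphi}(s,T)$ and $\bar{\psi}(s,x,T)$. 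The moving lower limit in the continuous part, together with the loss of the term indexed by $t_i$ in the sum as $t$ crosses $t_i$, produces a drift contribution $-f(t,t)\,dt$ on $\R_+\setminus\cT$ and a jump of $-f(t_i-,t_i)$ at each $t_i\in\cT$.

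Next, I would isolate the jumps of $G$ at the deterministic times $\tau\in\cS\cup\cT$ by computing $G_{\tau-}$ and $G_\tau$ directly. Using $\eta(\{\tau\}) = \delta_\cT(\tau)$ and the fact that the time-marginal of $\nu$ is atomless, so that the jumps of $f(\cdot,u)$ at fixed $\tau$ reduce to $\Delta V(\tau,u)$, a short calculation together with the identity $f(\tau,\tau) = f(\tau-,\tau) + \Delta V(\tau,\tau)\delta_\cS(\tau)$ yields
\[
\Delta G_\tau = -f(\tau,\tau)\,\delta_\cT(\tau) + \bar V(\tau,T)\,\delta_\cS(\tau), \qquad \tau \in \cS\cup\cT.
\]
Combined with $d\log S^0_t = \rho_t\,dt$ on the continuous part and $\Delta\log S^0_\tau = \rho_\tau\,\delta_\cT(\tau)$, this gives the decomposition
\begin{align*}
dX_t &= \bigl[f(t,t)-\rho_t-\bar{\alpha}(t,T)\bigr]\,dt - \bar{\varphi}(t,T)\,dW_t - \int_E \bar{\psi}(t,x,T)\,\tilde\mu(dt,dx),\\
\Delta X_\tau &= -\bar V(\tau,T)\,\delta_\cS(\tau) + \bigl(f(\tau,\tau) - \rho_\tau\bigr)\,\delta_\cT(\tau),\quad \tau\in\cS\cup\cT.
\end{align*}

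Finally, I would feed this into the formula for $Y$. The continuous quadratic variation yields $\tfrac{1}{2}\int_0^\cdot\|\bar{\varphi}(s,T)\|^2\,ds$. The sum over jumps of $X$ splits into: (i) the jumps of $\tilde\mu$, which, since $\nu$ is absolutely continuous in time, are disjoint from $\cS\cup\cT$ and produce $\int_0^\cdot\!\int_E (e^{-\bar{\psi}(s,x,T)}-1+\bar{\psi}(s,x,T))\,\mu(ds,dx)$, and (ii) the deterministic jumps, for which the algebraic simplification $\Delta X_\tau + (e^{\Delta X_\tau}-1-\Delta X_\tau) = e^{\Delta X_\tau}-1$ absorbs the $\Delta X_\tau$ already present in $X-X_0$ and leaves the pure-jump process $\sum_{\tau\in\cS\cup\cT}(e^{\Delta X_\tau}-1)\,\Ind_{\dbraco{\tau,\infty}}$. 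Identifying the surviving terms with the exponent in \eqref{eq:bond_prices_stoch_exp} concludes the argument. The main obstacle I expect is the rigorous use of stochastic Fubini when the inner $u$-domain $(t,T]$ depends on the outer variable $t$ (forcing the moving-boundary drift $-f(t,t)\,dt$ to appear), together with the careful accounting of $\Delta G_\tau$ at $\tau\in\cS\cap\cT$, where the atom of $\eta$ and the jump of $V$ interact nontrivially.
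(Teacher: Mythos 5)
Your proposal is correct and follows essentially the same route as the paper: one first derives the semimartingale decomposition of $\int_{(t,T]}f(t,u)\,\eta(du)$ via stochastic Fubini (with the same careful bookkeeping of the moving boundary and of $\Delta G_\tau$ at $\tau\in\cS\cup\cT$, including the identity $f(\tau,\tau)=f(\tau-,\tau)+\Delta V(\tau,\tau)\delta_{\cS}(\tau)$), and then converts the ordinary exponential into a stochastic exponential — your explicit formula for $Y$ is precisely the content of the result the paper cites at this step (Theorem II.8.10 of Jacod--Shiryaev).
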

\begin{proof}
Assumption \ref{ass} ensures that ordinary and stochastic Fubini  theorems are applicable and, hence, we can write
\begin{equation}	\label{eq:stoch_exp_proof}	\begin{aligned}
F(t,T) &:= \int_{(t,T]}f(t,u)\eta(du)
= \int_0^Tf(0,u)\eta(du) + \int_0^t\bar{\alpha}(s,T)ds + \int_0^t\bar{\varphi}(s,T)dW_s \\
&\quad + \sum_{i=1}^M\bar{V}(s_i,T)\Ind_{\{s_i\leq t\}} + \int_0^t\int_E\bar{\psi}(s,x,T)\tilde{\mu}(ds,dx) - \int_0^tf(u,u)\eta(du)
=: G(t,T),
\end{aligned}\end{equation}
for all $0\leq t<T$. Since $F(T,T)=0$, to prove that \eqref{eq:stoch_exp_proof} holds also for $t=T$, it suffices to show that $\Delta G(T,T):=G(T,T)-G(T-,T)=-F(T-,T)$. Moreover, since $\mu(\{T\},E)=0$ a.s. for all $T\in\R_+$, it is enough to show that $\Delta G(T,T)=-F(T-,T)$ for all $T\in\cS\cup\cT$. This holds because 
\begin{align*}
\Delta G(T,T) &= \bar{V}(T,T)\delta_{\cS}(T)-f(T,T)\delta_{\cT}(T)
= \bigl(\bar{V}(T,T)\delta_{\cS}(T)-f(T,T)\bigr)\delta_{\cT}(T)	
= -f(T-,T)\delta_{\cT}(T) \\
&= -F(T-,T),
\end{align*}
as a consequence of the definition of $F(t,T)$.
Therefore,
\[
\frac{P(t,T)}{S^0_t} = \exp\left(-G(t,T)-\int_0^t\rho_u\,\eta(du)\right).
\]
Representation \eqref{eq:bond_prices_stoch_exp} then follows from \cite[Theorem II.8.10]{JacodShiryaev}.
\end{proof}

The following theorem characterizes when $S^0$-discounted bond prices are local martingales under $Q$ or, equivalently, when $Q$ is a risk-neutral measure.

\begin{theorem}	\label{thm:NA}
Suppose that Assumption \ref{ass} holds. Then, $Q$ is a risk-neutral measure if and only if 
\begin{equation}
\label{int-cond-1}
\int_0^T \int_E \big| e^{-\bar{\psi}(s,x,T)}-1 + \bar{\psi}(s,x,T))  \big|\lambda_s(dx)ds <   \infty\text{ a.s.}\quad\text{ for all }T\geq0,
\end{equation}
and the random variable 
\begin{equation}
\label{int-cond-2}
e^{-\int_{(\tau,T]}\Delta V(\tau,u)\eta(du)\delta_{\cS}(\tau)-\Delta\rho_{\tau}\delta_{\cT}(\tau)}
\end{equation}
is sigma-integrable with respect to $\ccF_{\tau-}$, for every $\tau\in\cS\cup\cT$, and the following four conditions hold:
\begin{enumerate}
\item $f(t,t)=\rho_t$ $(Q\otimes dt)$-a.e.,
\item on $[0,T]$, it holds $(Q\otimes dt)$-a.e. that 
\[
\bar{\alpha}(t,T) = \frac{1}{2}\|\bar{\varphi}(t,T)\|^2 + \int_E\bigl(e^{-\bar{\psi}(t,x,T)}-1+\bar{\psi}(t,x,T)\bigr)\lambda_t(dx),
\]
\item for every $j=1,\ldots,N$, it holds a.s. 
that
\[
f(t_j-,t_j) = \rho_{t_j-} -\log\bigl(E[e^{-\Delta\rho_{t_j}}|\cF_{t_j-}]\bigr),
\]
\item for every $i=1,\ldots,M$, it holds a.s. 
that
\[
E\Bigl[e^{-\Delta\rho_{s_i}\delta_{\cT}(s_i)}\bigl(e^{-\int_{(s_i,T]}\Delta V(s_i,u)\eta(du)}-1\bigr)\Big|\cF_{s_i-}\Bigr] = 0.
\]
\end{enumerate}
\end{theorem}
\begin{proof}
In view of \eqref{eq:bond_prices_stoch_exp}, $P(\cdot,T)/S^0$ is a local martingale if and only if the finite variation process 
\begin{equation}	\label{eq:K}
K(\cdot,T) := \int_0^{\cdot}k(s,T)ds + K^{(1)}(\cdot,T) + K^{(2)}(\cdot,T)
\end{equation}
is a local martingale, where
\begin{align*}
k(t,T) &:= f(t,t)-\rho_t-\bar{\alpha}(t,T)+\frac{1}{2}\|\bar{\varphi}(t,T)\|^2 ,\\
K^{(1)}(t,T) &:= \int_0^t \int_E \bigl(e^{-\bar{\psi}(s,x,T)}-1 + \bar{\psi}(s,x,T)\bigr) \mu(ds,dx), \\
K^{(2)}(t,T) &:= \sum_{\tau\in\cS\cup\cT}\big(e^{-\bar{V}(\tau,T)\delta_{\cS}(\tau) + \left(f(\tau,\tau)-\rho_\tau\right)\delta_{\cT}(\tau)}  - 1 \big)    \Ind_{\{\tau\leq t\}},
\end{align*}
for all $t\in[0,T]$. If $K(\cdot,T)$ is a local martingale, it is also of locally integrable variation by \cite[Lemma I.3.11]{JacodShiryaev}. Since $\{\Delta K^{(1)}(\cdot,T)\neq0\}\cap\{\Delta K^{(2)}(\cdot,T)\neq0\}=\emptyset$, it holds that $|\Delta K^{(i)}(\cdot,T)|\leq|\Delta K(\cdot,T)|$, for $i=1,2$, and, therefore, both processes $K^{(i)}(\cdot,T)$, $i=1,2$, are of locally integrable variation. This implies that condition \eqref{int-cond-1} holds and the random variable
\[
e^{-\bar{V}(\tau,T)\delta_{\cS}(\tau) + \left(f(\tau,\tau)-\rho_\tau\right)\delta_{\cT}(\tau)}
\]
is sigma-integrable with respect to $\cF_{\tau-}$, for all $\tau\in\cS\cup\cT$ (see  \cite[Theorem 5.28]{he2018semimartingale}). 
Taking into account the definition of $\bar{V}(t,T)$, equation \eqref{eq:fwd_rate} and the fact that $(f(\tau-,\tau)-\rho_{\tau-})\delta_{\cT}(\tau)$ is $\cF_{\tau-}$-measurable, the latter property is equivalent to the sigma-integrability of the random variable \eqref{int-cond-2}.
Denoting by $\widehat{K}^{(i)}(\cdot,T)$ the compensator of $K^{(i)}(\cdot,T)$, for $i=1,2$, and making use of \cite[Theorem 5.29]{he2018semimartingale}, the local martingale property of $K(\cdot,T)$ is then equivalent to the validity of
\begin{align}
0 &= \int_0^{\cdot}k(s,T)ds+\widehat{K}^{(1)}(\cdot,T)
= \int_0^{\cdot}\left(k(s,T)+\int_E\bigl(e^{-\bar{\psi}(s,x,T)}-1 + \bar{\psi}(s,x,T)\bigr)\lambda_s(dx)\right)ds,
\label{eq:comp1}\\
0 &= \widehat{K}^{(2)}(\cdot,T) =  \sum_{\tau\in\cS\cup\cT}\left(E\bigl[\big(e^{-\bar{V}(\tau,T)\delta_{\cS}(\tau) + \left(f(\tau,\tau)-\rho_\tau\right)\delta_{\cT}(\tau)} \big)\big|\cF_{\tau-}\bigr] -1\right)   \Ind_{\dbraco{\tau,\infty}},
\label{eq:comp2}
\end{align}
up to an evanescent set. Equation \eqref{eq:comp1} holds if and only if
\begin{equation}	\label{eq:comp3}
f(t,t)-\rho_t-\bar{\alpha}(t,T)+\frac{1}{2}\|\bar{\varphi}(t,T)\|^2 + \int_E \bigl(e^{-\bar{\psi}(t,x,T)}-1 + \bar{\psi}(t,x,T)\bigr) \lambda_t(dx) = 0
\end{equation}
outside of a set of $(Q\otimes dt)$-measure zero. Taking $T=t$ in \eqref{eq:comp3} yields condition (i), while condition (ii) follows by inserting condition (i) into \eqref{eq:comp3}.
In view of \eqref{eq:fwd_rate} and the definition of $\bar{V}(t,T)$, equation \eqref{eq:comp2} holds if and only if 
\begin{equation}	\label{eq:comp4}
e^{(f(\tau-,\tau)-\rho_{\tau-})\delta_{\cT}(\tau)}E\Bigl[e^{-\int_{(\tau,T]}\Delta V(\tau,u)\eta(du)\delta_{\cS}(\tau)-\Delta\rho_\tau\delta_{\cT}(\tau)}\Big|\cF_{\tau-}\Bigr]=1
\text{ a.s.}
\end{equation}
for all $\tau\in\cS\cup\cT$.
Taking $T=\tau$ in \eqref{eq:comp4} yields condition (iii), while condition (iv) is then obtained by inserting condition (iii) into \eqref{eq:comp4}.

Conversely, if condition \eqref{int-cond-1} is satisfied and the random variable in \eqref{int-cond-2} is sigma-integrable, for every $j=1,\ldots,M$, then the compensator $\widehat{K}^{(i)}(\cdot,T)$ is well-defined, for $i=1,2$. It is then straightforward to verify that if conditions (i)--(iv) are satisfied, then equations \eqref{eq:comp1}-\eqref{eq:comp2} hold true. This implies that the process $K(\cdot,T)$ given in \eqref{eq:K} is a local martingale, thus proving the local martingale property of $P(\cdot,T)/S^0$, for every $T\geq0$.
\end{proof}

In the absence of stochastic discontinuities, it is well-known that $P(\cdot,T)/S^0$ is a local martingale if and only if the integrability requirement \eqref{int-cond-1} and conditions (i)-(ii) of Theorem \ref{thm:NA} hold, see \cite[Proposition 5.3]{BMKR}.
The presence of stochastic discontinuities is reflected in conditions (iii)-(iv) (together with the requirement of sigma-integrability of \eqref{int-cond-2}). More specifically, condition (iii) relates the stochastic discontinuities $\cT$ in the num\'eraire to the short end of the forward rate, while condition (iv) concerns the stochastic discontinuities $\cS$ in the forward term rate. 
Taken together, conditions (iii)-(iv) exclude the possibility of predicting the  size (or the direction) of the jump occurring at any discontinuity date, as this would be incompatible with absence of arbitrage (see \cite{FPP19} for an analysis of the arbitrage possibilities arising with predictable jumps).

The conditions of Theorem \ref{thm:NA} admit a simplification under the following additional assumption.
\begin{assumption}	\label{ass:jumps}
The set $\{(\omega,t)\in\Omega\times\cT:\Delta\rho_t(\omega)\neq0\}$ is evanescent and $\cS\cap\cT=\emptyset$.
\end{assumption}
This assumption  corresponds to requiring that the RFR $\rho$ and the forward rates do not jump in correspondence of the roll-over dates of the  num\'eraire $S^0$. In this case, conditions (i) and (iii) can be rewritten in the following compact way: 
\[
f(t,t)=\rho_t \quad (Q\otimes\eta)\text{-a.e.}
\]
Moreover, under Assumption \ref{ass:jumps} the term $\Delta\rho_{s_i}\delta_{\cT}(s_i)$ in condition (iv) vanishes.

\begin{remark}[Relation to \cite{fontana2020term}]
The presence of the two distinct sets $\cS$ and $\cT$ of discontinuity dates distinguishes the present setup from the framework used for modelling multi-curve term structures in \cite{fontana2020term}. For this reason, we have given full self-contained proofs of Lemma \ref{lem:bond_stoch_exp} and Theorem \ref{thm:NA}, which cannot be deduced in a straightforward way from \cite{fontana2020term}. At the same time, we have simplified some of the original techniques of \cite{fontana2020term}. 
\end{remark}

The next corollary provides the dynamics of the short end $f(t,t)$ of the instantaneous forward rate. We omit the proof, which follows the same arguments of \cite[Proposition 6.1]{Filipovic2009}. 

\begin{corollary}
Suppose that Assumption \ref{ass} holds.
Assume furthermore that $f(0, T)$, $\alpha(t, T)$, $\varphi(t, T)$, $\psi(t, x, T)$ are differentiable in $T$ with $\int_0^T|\partial_uf(0,u)|du<\infty$, for all $T\geq0$, and such that conditions (ii), (iii), (iv) from Assumption \ref{ass} are satisfied for $ \partial_T \alpha(t, T),  \partial_T  \varphi(t, T),  \partial_T  \psi(t, x, T)$, respectively. Then it holds that 
\[
 f(t,t) = f(0,0) +  \int_0^t \zeta(s) ds +   \int_0^t \varphi(s, s) dW_s + \int_0^t\int_E\psi(s,x,s)\tilde{\mu}(ds,dx)
 +  V(t, t),
\]
for all $t\geq0$, where
\begin{equation*}
\zeta(s) := \partial_s f(0, s) + \alpha(s,s) + \int_0^s \partial_s \alpha(u, s) du + \int_0^s \partial_s \varphi(u, s) dW_u +  \int_0^s \int_E\partial_s \psi(u,x,s) \tilde{\mu}(du,dx)
\end{equation*}
and 
$V(t, t) = \sum_{j=1}^{M} \Delta V(s_j, t) \Ind_{\{s_j \leq t\}}. $
\end{corollary}

\begin{remark}
In the classical HJM setup, the short end of the instantaneous forward rate is equal to the short rate, i.e., we have $r_t = f(t,t)$. In the presence of stochastic discontinuities, this must not necessarily be the case and absence of arbitrage implies only  the  equality  $f(t,t)=\rho_t$ $(Q\otimes dt)$-a.e., as shown in Theorem  \ref{thm:NA} (i).
It is interesting to see what happens if one assumes that the RFR $\rho$ is defined as the short end of the instantaneous forward rate by imposing  
\begin{equation}
\label{eq:short-end}
\rho_t:=f(t,t), 
\qquad\text{ for all } t \geq 0. 
\end{equation}
Condition (i) of Theorem \ref{thm:NA} is of course automatically satisfied.
Conditions (iii)-(iv) are equivalent to the validity of condition \eqref{eq:comp2} in the proof of Theorem \ref{thm:NA}. Under \eqref{eq:short-end}, this condition becomes
\[
E\bigl[e^{-\int_{[s_i,T]}\Delta V(s_i,u)\eta(du)}\big|\cF_{s_i-}\bigr]  = 1,
\qquad\text{ for all }i=1,\ldots,M.
\]
This in particular entails that, if all stochastic discontinuities in the model correspond only to roll-over dates $\cT$ (meaning that $\cS=\emptyset$), then setting $\rho$ equal to the short end of the forward rate as in \eqref{eq:short-end} implies that conditions (i), (iii) and (iv) of Theorem \ref{thm:NA} hold and $Q$ is a risk-neutral measure if and only if condition (ii) is satisfied, as in the classical HJM setup. 
\end{remark}

To illustrate the flexibility of the  extended HJM approach, we show how a generalized version of the popular Cheyette model with stochastic discontinuities fits into our framework.
In Section \ref{affine}, we will develop an approach based on the philosophy of short-rate modelling.

\begin{example}[A Cheyette-style model]
The Cheyette model is a Gaussian, multi-factor, Markovian model introduced in \cite{Cheyette}. This model class was independently considered also in \cite{jamshidian1991bond}, \cite{babbs1993generalized}, \cite{ritchken1993finite}. We refer to \cite{Beyna2013} for a detailed presentation of this class of models. 

In this example, we illustrate an extension of the Cheyette model to a setup with stochastic discontinuities. We consider only the presence of the expected jump dates $\cS$, assuming  $\cT=\emptyset$ for simplicity of presentation, so that $S^0=\exp(\int_0^\cdot r_u\, du)$.
We specify as follows the forward rates:
\[
f(t,T) = f(0,T) + \int_0^t \alpha(s,T) ds + \int_0^t \varphi(s,T) dW_s + \sum_{s_i \le t} \big(\alpha_i(T) + \xi_i g_i(T)\big), 
\]
where $\alpha$, $\varphi$, $\alpha_i$, $g_i$, for $i=1,\dots,M$, are deterministic functions and $\xi_i\sim \cN(\mu_i,\sigma_i^2)$, for $i=1,\dots,M$, are independent normally distributed random variables which are also independent of $W$. 

The central assumption underlying the Cheyette model corresponds to  
\[
\varphi(t,T) = \frac{a(T)}{a(t)} b(t),
\]
for suitable functions $a$ and $b$. This is a so-called separable volatility  structure which allows separating $t$ and $T$ in the continuous part of the model. More precisely, denoting by $f^c(\cdot,T)$ the continuous part of the process $f(\cdot,T)$, we have that
\begin{equation}    \label{eq:Cheyette_c}
f^c(t,T) = f(0,T) + \frac{a(T)}{a(t)} X^c_t + U^c(t,T),
\end{equation}
where $X^c$ is a mean-reverting Gaussian continuous Markov process with volatility function $b$ and $U^c(t,T)$ is an explicitly known function, see \cite[Section 2.2]{Beyna2013}.

Analogously, for the discontinuous part $f^d(t,T)$ of the forward rate it is natural to assume that
\[
g_i(T) = B_i A(T),
\qquad\text{ for }i=1,\ldots,M,
\]
for suitable constants $B_1,\ldots,B_M$ and an integrable function $A$. Similarly as above, this separable structure leads to a representation of $f^d(t,T)$ analogous to \eqref{eq:Cheyette_c}. Indeed, defining the pure jump Markov process $J:= \sum_{s_i \le \cdot} \xi_i B_i$, we obtain that
\begin{equation}    \label{eq:Cheyette_d}
f^d(t,T) := \sum_{s_i \le t} \big(\alpha_i(T) + \xi^i g_i(T)\big) 
= U^d(t,T) + A(T) J_t,
\end{equation}
where the explicit form of the function $U^d(t,T)$ can be deduced from condition (iv) of Theorem \ref{thm:NA}: 
\[
U^d(t,T) = \sum_{s_i\leq t}B_iA(T)\bigl(\sigma^2_iB_i\bar{A}(s_i,T)-\mu_i\bigr),
\]
with $\bar{A}(s_i,T):=\int_{s_i}^TA(u)du$, for all $i=1,\ldots,M$ and $T\geq0$.

If in addition $a(\cdot)=A(\cdot)$, then combining equations \eqref{eq:Cheyette_c} and \eqref{eq:Cheyette_d} we directly obtain that
\[
f(t,T) = f(0,T) + \frac{a(T)}{a(t)}X_t + U(t,T),
\]
where $X_t:=X^c_t+a(t)J_t$ and $U(t,T):=U^c(t,T)+U^d(t,T)$. Moreover, making use of equation (2.20) in \cite{Beyna2013} and by means of straightforward computations, it can be proved that $X$ is a mean-reverting Gaussian Markov jump-diffusion process with speed of mean reversion $\partial_t\log(a(t))$, jumping only in correspondence with the predetermined stochastic discontinuities dates $s_1,\ldots,s_M$.
This example illustrates that the Cheyette model class can be easily extended to the case of stochastic discontinuities, retaining a remarkable level of analytical tractability.
\end{example}

\section{The affine framework for overnight rate modelling}
\label{affine}

In this section, we present a modelling framework based on {\em affine semimartingales}. We focus here on the direct modelling of the overnight rate to provide a setup for developing   extensions of the Hull-White and other affine models which are able to include stylized facts of overnight markets, namely spikes and jumps occurring at predetermined dates. Note that an affine specification of the HJM framework presented in Section \ref{sec3} is also possible, leading to explicit conditions in  Theorem \ref{thm:NA}. 

Affine semimartingales, as introduced in \cite{keller-ressel2019}, generalize affine processes by allowing for discontinuities at fixed points in time with possibly state-dependent jump sizes and are therefore tailor-made to interest rate markets in the presence of overnight rates. 
We shortly introduce some general notions on affine semimartingales and then develop an affine model for overnight rates.

\subsection{Affine semimartingales}
Consider a semimartingale $X$ taking values in $D:=\R^m_+\times\R^n$. It is called \emph{affine} when  for all $0\leq t\leq T<\infty$ and $u\in\mathcal{U}:=\C^m_-\times{\rm i}\R^n$,
\begin{equation}	\label{eq:charfun}
E\bigl[e^{\langle u,X_T\rangle} | \cF_t\bigr] 
= \exp\bigl(\phi_t(T,u) + \langle\psi_t(T,u),X_t\rangle\bigr),
\end{equation}
where the functions $\phi_t(T,u)$ and $\psi_t(T,u)$ take values in $\C$ and $\C^d$, respectively. 
We assume that ${\rm conv}({\rm supp}(X_t))=D$, for all $t>0$, and that $X$ is quasi-regular and infinitely divisible, in the sense that the regular conditional distribution $Q(X_t\in dx|X_s)$ is an infinitely divisible probability measure on $D$ a.s. for all $0\leq s\leq t<\infty$. 

The property that affine processes have affine semimartingale characteristics has a natural extension to the semimartingale case. More precisely, by \cite[Lemma 4.3 and Theorem 3.2]{keller-ressel2019}, there is no loss of generality in assuming that $X$ is a Markov process and the semimartingale characteristics $(B^X,C^X,\nu^X)$ of $X$ with respect to a fixed truncation function $h:\R^d\rightarrow\R^d$, with $d=m+n$, have the following structure, where $B^{X,c}$ and $\nu^{X,c}$ denote the continuous parts of $B^X$ and $\nu^X$, respectively:
\begin{align*}
\nonumber B^{X,c}_t &= \int_0^t\Bigl(\beta^X_0(s) + \sum_{i=1}^dX^i_{s-}\beta^X_i(s)\Bigr)ds,	\\
\nonumber C^X_t &= \int_0^t\Bigl(\alpha^X_0(s) + \sum_{i=1}^dX^i_{s-}\alpha^X_i(s)\Bigr)ds,	\\
\nonumber \nu^{X,c}(dt,dx) &= \Bigl(\mu^X_0(t,dx)+\sum_{i=1}^dX^i_{t-}\mu^X_i(t,dx)\Bigr)dt,	\\
\int_D(e^{\langle u,x\rangle}-1)\nu^X(\{t\},dx) &= \exp\Bigl(\gamma^X_0(t,u)+\sum_{i=1}^dX^i_{t-}\gamma^X_i(t,u)\Bigr)-1,
\end{align*}
where $\beta^X_i:\R_+\rightarrow\R^d$, $\alpha^X_i:\R_+\rightarrow\mathcal{S}^d$,  $\gamma^X_i:\R_+\times\mathcal{U}\rightarrow\C$, for $i=0, 1, \ldots, d$, and $(\mu^X_i(t,\cdot))_{t\geq0}$ is a family of  L\'evy  measures on $D\setminus\{0\}$, with $\mathcal{S}^d$ denoting the cone of symmetric positive semidefinite $d\times d$ matrices. 
In addition, it holds that $\gamma^X_i(t,u)=0$, for all $i$ and  for all $(t,u)\in(\R_+\setminus J^X)\times\mathcal{U}$, where $J^X:=\{t\in\R_+ : \nu^X(\{t\},D)\neq0\}$ represents the set of stochastic discontinuity dates of the affine semimartingale $X$. 

The functions $\phi$ and $\psi$ from equation \eqref{eq:charfun} can now be described  by the following generalized Riccati equations, see \cite[Theorem 3.2]{keller-ressel2019}: first, for all $(T,u)\in\R_+\times\mathcal{U}$, the continuous parts $\phi^c_t(T,u)$ and $\psi^c_t(T,u)$ satisfy
\begin{align*}
\frac{d\phi_t^c(T,u)}{dt} &= -F^X(t,\psi_t(T,u)),	\\
\frac{d\psi_t^c(T,u)}{dt} &= -R^X(t,\psi_t(T,u)),	
\end{align*}
where the functions $F^X$ and $R^X$ are of L\'evy-Khintchine form, for all $i=1,\ldots,d$:
\begin{align*}
F^X(t,u) &= \langle\beta^X_0(t),u\rangle + \frac{1}{2}\langle u,\alpha^X_0(t)u\rangle + \int_{D \setminus\{0\}}\bigl(e^{\langle x,u\rangle}-1-\langle h(x),u\rangle\bigr)\mu^X_0(t,dx),	\\
R^X_i(t,u) &= \langle\beta^X_i(t),u\rangle + \frac{1}{2}\langle u,\alpha^X_i(t)u\rangle + \int_{D\setminus\{0\}}\bigl(e^{\langle x,u\rangle}-1-\langle h(x),u\rangle\bigr)\mu^X_i(t,dx).
\end{align*}
Second, for all $(T,u)\in\R_+\times\mathcal{U}$, the discontinuous parts of $\phi_t(T,u)$ and $\psi_t(T,u)$ are determined by
\begin{align*}
\Delta\phi_t(T,u) &= -\gamma^X_0(t,\psi_t(T,u)),	\\
\Delta\psi_t(T,u) &= -\bar{\gamma}^X(t,\psi_t(T,u)),
\end{align*}
where $\bar{\gamma}^X(t,u)=(\gamma^X_1(t,u),\ldots,\gamma^X_d(t,u))\in\R^d$. Finally, the terminal conditions are given by
\[
\phi_T(T,u) = 0
\qquad\text{ and }\qquad
\psi_T(T,u) = u.
\] 

\begin{remark}
In order to be consistent with the framework of Section \ref{sec3}, we restrict our attention to affine semimartingales whose characteristics do not contain singular continuous parts. However, we point out that all results of this section can be generalized in a straightforward way to that case. 
\end{remark}

\subsection{Affine models for overnight rates} \label{sec:affine_short}
A widely used approach in interest rate theory  consists in modelling  the short rate and computing bond and derivative prices by risk-neutral valuation. If the short-rate model is sufficiently tractable (e.g., in the case of affine models as in \cite[Section 13]{DuffieFilipovicSchachermayer}), then explicit valuation formulas for bond prices and interest rate derivatives can be obtained. In this section, we show how this short-rate approach can be extended to the case of affine semimartingales with stochastic discontinuities.

We assume that the RFR rate $\rho$ is given by
\begin{equation}	\label{eq:SOFR_affine}
\rho_t = \ell(t) + \langle\Lambda,X_t\rangle,
\qquad t\geq0,
\end{equation}
where $\Lambda\in\R^d$ and $\ell:\R_+\rightarrow\R$ is a deterministic and c\`adl\`ag function such that $\int_0^T|\ell(u)|\eta(du)<\infty$, for all $T>0$, with measure $\eta$ from \eqref{eq:mu}. The function $\ell$ serves to fit the initially observed term structure of bond prices, similarly as in \cite{BrigoMercurio01b}.

In view of \eqref{eq:charfun}-\eqref{eq:SOFR_affine}, the conditional characteristic function of the RFR is directly available. However, this does not suffice for the valuation of bonds and interest rate derivatives, since the discount factor and most types of payoffs depend on the integrated process $R:=\int_0^{\cdot}\rho_t\,\eta(dt)$. 
The following proposition shows that the joint process $(X,R)$ is an affine semimartingale on the extended state space $D\times\R$. 
This result, which can be considered of independent interest, represents a generalization to affine semimartingales of the enlargement of the state space approach of \cite[Section 11.2]{DuffieFilipovicSchachermayer}. The proof is technical and therefore postponed to the Appendix.

\begin{proposition}	\label{prop:Y}
Consider the $(D\times\R)$-valued process $Y:=(X,R)$. Then, 
\begin{equation}	\label{eq:CF_Y}
E\bigl[e^{\langle u,X_T\rangle+vR_T} | \cF_t\bigr] 
= \exp\bigl(\Phi_t(T,u,v) + \langle\Psi_t(T,u,v),X_t\rangle+vR_t\bigr),
\end{equation}
for all $(u,v)\in\mathcal{U}\times{\rm i}\R$ and $0\leq t\leq T$, where the functions $\Phi_t(T,u,v)$ and $\Psi_t(T,u,v)$ take values in $\C$ and $\C^d$, respectively, and solve the following generalized Riccati equations:
\begin{equation}	\label{eq:RicattiY_1}	\begin{aligned}
\frac{d\Phi_t^c(T,u,v)}{dt} &= -F^Y(t,\Psi_t(T,u,v),v),	\\
\frac{d\Psi_t^c(T,u,v)}{dt} &= -R^Y(t,\Psi_t(T,u,v),v),	
\end{aligned}	\end{equation}
with
\begin{equation}	\label{eq:RicattiY_2}	\begin{aligned}
F^Y(t,u,v) &= F^X(t,u) + \ell(t)v,	\\
R^Y_i(t,u,v) &= R^X_i(t,u) + \Lambda_iv,
\end{aligned}	\end{equation}
for all $i=1,\ldots,d$, and
\begin{equation}	\label{eq:RicattiY_3}	\begin{aligned}
\Delta\Phi_t(T,u,v) 
&= -\delta_{\cT^c}(t)\gamma^X_0(t,\Psi_t(T,u,v)) - \delta_{\cT}(t)\bigl(v\ell(t)+\gamma^X_0(t,\Psi_t(T,u,v)+\Lambda v)\bigr),	\\
\Delta\Psi_t(T,u,v) 
&= -\delta_{\cT^c}(t)\bar{\gamma}^X(t,\Psi_t(T,u,v)) - \delta_{\cT}(t)\bigl(v\Lambda+\bar{\gamma}^X(t,\Psi_t(T,u,v)+\Lambda v)\bigr),
\end{aligned}	\end{equation}
 satisfying the terminal conditions
\begin{equation}	\label{eq:RicattiY_4}	\begin{aligned}
\Phi_T(T,u,v) = 0
\qquad\text{ and }\qquad
\Psi_T(T,u,v) = u.
\end{aligned}	\end{equation}
In particular, the joint process $(X,R)$ is an affine infinitely divisible semimartingale.
\end{proposition}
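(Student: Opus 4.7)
My approach is to verify the candidate Laplace transform \eqref{eq:CF_Y} by a direct martingale argument and then read off the affine infinitely divisible structure of $Y=(X,R)$ from it. Concretely, I would \emph{define} $(\Phi,\Psi)$ as a solution of the proposed generalized Riccati system \eqref{eq:RicattiY_1}--\eqref{eq:RicattiY_4} (existence being inherited from the affine semimartingale theory of \cite{keller-ressel2019} applied on the extended state space $D\times\R$), introduce
\[
M_t := \exp\bigl(\Phi_t(T,u,v) + \langle\Psi_t(T,u,v),X_t\rangle + vR_t\bigr),
\]
and show that $M$ is a true $Q$-martingale. The terminal conditions \eqref{eq:RicattiY_4} then give $M_T=\exp(\langle u,X_T\rangle+vR_T)$, so that $M_t=E[M_T\mid\cF_t]$ yields \eqref{eq:CF_Y}. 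The exponential-affine form of this expression together with the L\'evy--Khintchine structure of $F^Y$ and $R^Y_i$ at the level of characteristics identifies $Y$ as an affine infinitely divisible semimartingale in the sense of \cite{keller-ressel2019}.

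\textbf{Main steps.} The process $R$ admits the special semimartingale decomposition $R_t=\int_0^t(\ell(s)+\langle\Lambda,X_{s-}\rangle)\,ds+\sum_{t_j\in\cT,\,t_j\le t}\rho_{t_j}$, with no continuous martingale part and jumps only at dates in $\cT$. Applying It\^o's formula to $M$ with the affine characteristics \eqref{eq:affine-char-X} of $X$, I would collect the continuous drift of $M_t/M_{t-}$ and, upon inserting the L\'evy--Khintchine definitions of $F^X$ and $R^X_i$, obtain an expression which is linear in $X_{t-}$ and which vanishes identically if and only if $(\Phi,\Psi)$ satisfies \eqref{eq:RicattiY_1} with $F^Y,R^Y_i$ as in \eqref{eq:RicattiY_2}; the extra contribution $v(\ell(t)+\langle\Lambda,X_{t-}\rangle)$ from the drift of $R$ is absorbed precisely by the additional summands $\ell(t)v$ in $F^Y$ and $\Lambda_iv$ in $R^Y_i$. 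At each fixed discontinuity $t\in\cT\cup J^X$ the conditional expectation of $M_t/M_{t-}$ given $\cF_{t-}$ must equal one. For $t\in J^X\setminus\cT$ only $X$ jumps and this forces the first alternatives in \eqref{eq:RicattiY_3}. For $t\in\cT$ one has in addition $\Delta R_t=\ell(t)+\langle\Lambda,X_{t-}\rangle+\langle\Lambda,\Delta X_t\rangle$; plugging this in and using the fourth line of \eqref{eq:affine-char-X}, the argument of the $X$-jump Laplace exponent shifts from $\Psi_t$ to $\Psi_t+\Lambda v$, while the deterministic terms $v\ell(t)$ and $v\langle\Lambda,X_{t-}\rangle$ appear separately, yielding the second alternatives in \eqref{eq:RicattiY_3}.

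\textbf{Finishing up and expected obstacle.} Once these cancellations are established, $M$ is a local martingale; passage to a true martingale can be carried out by a standard localization argument along the lines of \cite[Lemma 10.2]{DuffieFilipovicSchachermayer}, using the quasi-regularity and infinite divisibility framework of \cite{keller-ressel2019} to control the admissible range of $\Psi_t(T,u,v)$ for $(u,v)\in\mathcal U\times {\rm i}\R$. The delicate part of the proof will be the bookkeeping at fixed times in $\cT\cap J^X$, which is not excluded here since Assumption \ref{ass:jumps} is not imposed in this proposition: at such $t$ the process $X$ has a stochastic jump \emph{and} $R$ has an additional $X_{t-}$- and $\Delta X_t$-dependent shift, and both contributions must be merged into a single conditional Laplace transform through the translation $\Psi_t\mapsto\Psi_t+\Lambda v$ in $\gamma^X$. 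Once this is correctly handled, the affine form of the characteristics $(B^{Y,c},C^Y,\nu^Y)$ of $Y$ on $D\times\R$ is immediate from the construction, the $R$-coordinate contributing only to the drift and to the jump part with linear coefficients given by $\ell$ and $\Lambda$; infinite divisibility of the conditional law $Q(Y_T\in\cdot\mid\cF_t)$ follows from the L\'evy--Khintchine form of the exponent on the right-hand side of \eqref{eq:CF_Y}.
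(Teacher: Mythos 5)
Your route is genuinely different from the paper's. You verify the candidate formula by an exponential-martingale argument: define $(\Phi,\Psi)$ as a solution of the Riccati system, show via It\^o's formula that $M_t=\exp(\Phi_t+\langle\Psi_t,X_t\rangle+vR_t)$ is a martingale, and conclude from the terminal condition. The paper instead never touches $M$ directly: it computes the semimartingale characteristics $(B^Y,C^Y,\nu^Y)$ of $Y=(X,R)$, exhibits them as affine with explicit parameters $(\beta^Y,\alpha^Y,\mu^Y,\gamma^Y)$, upgrades these to an \emph{enhanced} parameter set $(\tilde\beta^Y,\tilde\alpha^Y,\tilde\mu^Y)$ using the L\'evy--Khintchine form of the $\gamma^X_i$ granted by infinite divisibility of $X$, checks admissibility, and then invokes the existence theorem of \cite{keller-ressel2019} to produce a canonical affine infinitely divisible semimartingale $Y'$ with the same characteristics; equality in law of $Y$ and $Y'$ then delivers \eqref{eq:CF_Y} via the Riccati characterization. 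Your approach is more self-contained and avoids the detour through the canonical space, and your bookkeeping of the jump contributions (the shift $\Psi_t\mapsto\Psi_t+\Lambda v$ at dates in $\cT$, the splitting between $\cT$ and $J^X\setminus\cT$) matches exactly the computation of $\int(e^{\langle u,x\rangle+vr}-1)\nu^Y(\{t\},dx,dr)$ in the paper.

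There are, however, three places where your sketch leans on exactly the work it is trying to avoid. First, you take existence of a solution to \eqref{eq:RicattiY_1}--\eqref{eq:RicattiY_4} as ``inherited from \cite{keller-ressel2019} applied on the extended state space'' --- but the existence results there are stated for \emph{admissible} parameter sets, so you would still have to compute the extended parameters and verify admissibility on $D\times\R$, which is the bulk of the paper's proof; as written this is close to circular. Second, the local-to-true martingale upgrade is only gestured at: for $(u,v)\in\mathcal U\times{\rm i}\R$ boundedness of $M$ would follow from $\Psi_t(T,u,v)\in\mathcal U\times{\rm i}\R$ along the flow, and this invariance needs to be proved, not assumed (the reference to \cite[Lemma 10.2]{DuffieFilipovicSchachermayer} is misplaced --- that lemma concerns uniqueness in law of Markov processes, not integrability). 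Third, the final claim that infinite divisibility of $Q(Y_T\in\cdot\mid\cF_t)$ ``follows from the L\'evy--Khintchine form of the exponent'' presupposes that $(u,v)\mapsto\Phi_t+\langle\Psi_t,x\rangle+vr$ \emph{is} a L\'evy--Khintchine exponent, which is precisely what the paper establishes through \cite[Lemma 4.4]{keller-ressel2019} and the construction of $(\tilde\beta^Y,\tilde\alpha^Y,\tilde\mu^Y)$; it does not come for free from the Riccati equations. With these three points filled in --- essentially by carrying out the paper's parameter computation as a preliminary step --- your verification argument would give a complete alternative proof.
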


The explicit characterization of the conditional Fourier transform of the joint process $(X,R)$ obtained in Proposition \ref{prop:Y} allows for an efficient valuation of a variety of interest rate derivatives in the post-Libor environment. In particular, bonds can be priced by evaluating \eqref{eq:CF_Y} at $(u,w)=(0,-1)$, whenever the expectation is finite. In turn, this leads to explicit pricing formulas for all linear derivatives such as swaps. Non-linear derivatives can be priced by relying on Fourier methods, analogously to the case of affine processes, see \cite[Section 10.3]{Filipovic2009} and \cite{Fontana22}.

\subsection{A Hull-White model for the overnight rate}
\label{sec:Hull_White}

In this section, we present a tractable specification of an affine model for the overnight rate.
More specifically, we generalize the Hull-White model to the case of stochastic discontinuities. 
We consider a finite set $\cT=\{t_1,\ldots,t_N\}$ representing the roll-over dates of the num\'eraire $S^0$, while $S=\{s_1,\ldots,s_M\}$ denotes the set of expected jump dates of the RFR process.

Let $\rho$ be the unique strong solution of the following stochastic differential equation:
\begin{align}\label{eq:drho}
	d\rho_t = \big( \alpha(t) + \beta \rho_t\big) dt + \sigma dW_t + dJ_t,
\end{align}
where $\alpha:\R\to\R$ is a continuous function,  
$\beta \in \R$,  $\sigma \ge 0$ and $W=(W_t)_{t\geq0}$ is a Brownian motion.  
The process $J=(J_t)_{t\geq0}$ in \eqref{eq:drho} is a pure jump process specified as follows:
\begin{equation}	\label{eq:process_J}
J = \sum_{i=1}^M \xi^i\Ind_{\dbraco{s_i,\infty}},
\end{equation}
where the random variables $\{\xi^i;i=1,\ldots,M\}$ are assumed to be independent of $W$.
The following lemma gives the explicit solution to \eqref{eq:drho} and is an immediate consequence of It\^o's formula.

\begin{lemma} \label{lem:rho}
The solution $\rho=(\rho_t)_{t\geq0}$ of the SDE \eqref{eq:drho} satisfies, for all $0\leq t\leq T$,
\begin{equation}\label{eq:335}
		\rho_T = \rho_te^{\beta(T-t)} + \int_t^T e^{\beta(T-s)} \alpha(s) ds + \sigma e^{\beta T}\int_t^T e^{-\beta s} dW_s
		+ \sum_{i=1}^M \ind{s_i \in (t,T]} \xi^i e^{\beta(T-s_i)}.
\end{equation}
\end{lemma}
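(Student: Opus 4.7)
The approach is the standard variation-of-parameters trick adapted to the semimartingale setting. I would apply It\^o's formula to the process $e^{-\beta t}\rho_t$, viewing $\rho$ as a semimartingale with continuous part driven by drift and Brownian noise and a pure-jump finite-variation part $J$ whose jumps occur at the deterministic times $s_i \in \cS$.

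Concretely, take the smooth function $f(t,x) = e^{-\beta t}x$. Since $f$ is linear in $x$, the It\^o expansion has no quadratic covariation contribution. For a general semimartingale $\rho$ one gets
\[
 d(e^{-\beta t}\rho_t) = -\beta e^{-\beta t}\rho_{t-}\,dt + e^{-\beta t}\,d\rho_t.
\]
Substituting the SDE \eqref{eq:drho} and noting that $\rho_{t-}=\rho_t$ outside the countable set $\cS$ (so they agree $dt$-a.e.), the $\beta \rho_t$ terms cancel and the dynamics collapse to
\[
 d(e^{-\beta t}\rho_t) = e^{-\beta t}\alpha(t)\,dt + \sigma e^{-\beta t}\,dW_t + e^{-\beta t}\,dJ_t.
\]
Integrating from $t$ to $T$ and multiplying through by $e^{\beta T}$ yields the first three terms in \eqref{eq:335}.

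It then remains to evaluate $\int_t^T e^{\beta(T-s)}\,dJ_s$. Because $J$ is the pure-jump process given in \eqref{eq:process_J}, which is piecewise constant with jumps $\xi^i$ at the deterministic times $s_i$, the integral against the continuous integrand $e^{\beta(T-\cdot)}$ reduces to the pathwise Stieltjes sum $\sum_{i=1}^M \Ind_{\{s_i\in(t,T]\}}e^{\beta(T-s_i)}\xi^i$, matching the final term in \eqref{eq:335}. I do not expect any genuine obstacle here; the only point requiring care is the jump term, and this is handled cleanly since $J$ has finite variation and the integrand is deterministic and continuous. The independence of $\{\xi^i\}$ from $W$ is not needed for the existence and form of the solution, only the well-posedness of the integrals, which follows from the assumptions stated before the lemma.
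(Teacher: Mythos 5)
Your proposal is correct and coincides with the paper's argument: the paper simply states that the lemma ``is an immediate consequence of It\^o's formula,'' and your variation-of-parameters computation with $e^{-\beta t}\rho_t$ (linear in $x$, hence no quadratic-variation or jump-correction terms, with the jump integral reducing to a finite Stieltjes sum over the $s_i$) is exactly that argument spelled out.
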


As illustrated in the following example, processes of the form \eqref{eq:335}, despite their simple structure, allow for different types of stochastic discontinuities that are in line with the empirical features of overnight rates, as discussed in the introduction (see Figure \ref{fig1}).  

\begin{example}[A two-factor affine specification]\label{ex:affine}
Overnight rates are characterized by the presence of spikes and jumps, both potentially occurring at predetermined dates.
While affine processes can exhibit a spiky behavior (e.g., shot-noise processes, see \cite{Schmidt2008modelling}), they do not accommodate spikes and jumps at predetermined dates. These features can be reproduced by affine semimartingales of the form \eqref{eq:335}. For the purpose of illustration, let $\rho^1$ and $\rho^2$ be two independent process satisfying \eqref{eq:335}, with parameters
\begin{gather*}
\rho^1_0 = 0.01875, \quad
\beta^1 = 0.20, \quad
\alpha^1(t)=\alpha^1=0.01, \quad
\sigma^1 = 0.012,\\
\rho^2_0 = 0, \quad
\beta^2 = 80, \quad
\alpha^2(t)=\alpha^2=0, \quad
\sigma^2 = 0,
\end{gather*}
for all $t\geq0$. In this example, the process $\rho^1$ is assumed to have a single discontinuity at time $s^1_1=150$, while $\rho^2$ has discontinuity dates $s^2_1=50$ and $s^2_2=100$. We assume that all jump sizes are i.i.d. and distributed as $\ccN(0.1,0.4^2)$. Figure \ref{fig:vasicek} shows a simulated path of the process $\rho:=\rho^1+\rho^2$. We can observe that $\rho$ exhibits spikes at $t=50$ and $t=100$ and a jump to a new level at $t=150$.
The spiky behavior is generated by $\rho^2$, which has no diffusive component and very high mean-reversion speed (increasing $\beta^2$ further would produce even more pronounced spikes). The jump to a new level at $t=150$ is generated by the component $\rho^1$, which has a much slower mean-reversion.
\end{example}

\begin{figure*}[t]
\includegraphics[width=8cm]{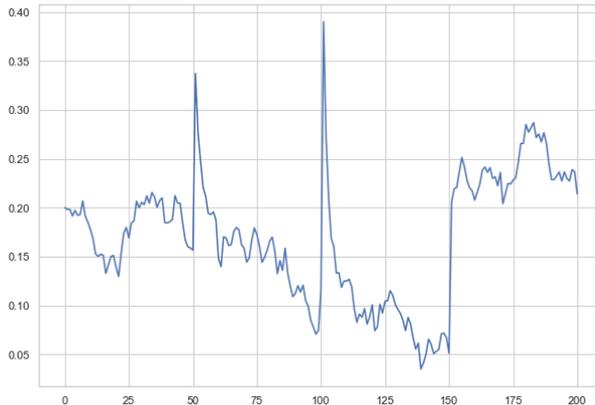}
\caption{A simulation of the two-factor affine model considered in Example \ref{ex:affine}. The simulated path  exhibits spikes at the discontinuity dates $t=50$ and $t=100$ and a jump to a new level at $t=150$.}
\label{fig:vasicek}
\end{figure*}

In the next results, we shall make use of the following notation:
\[
a(t,T) := \int_t^T e^{\beta(T-s)} \alpha(s) ds. 
\]
For $t=0$, we simply write $a(T):=a(0,T)$.
In the following proposition, we compute the conditional characteristic function of the solution to SDE \eqref{eq:335}.
While this result can be deduced from the general theory of affine semimartingales, we provide a direct proof exploiting the structure of \eqref{eq:335}.

\begin{proposition}
Let $\rho=(\rho_t)_{t\geq0}$ be the solution of the SDE \eqref{eq:335} and assume that the random variables $\{\xi^i;i=1,\ldots,M\}$ are independent. Then, for every $u\in{\rm i}\R$ and $0\leq t\leq T$, it holds that
\[
E\big[e^{u \rho_T} \big| \rho_t \big] 
= \exp\big( \phi(u,t,T) + \psi(u,T-t) \rho_t \big),
\]
where 
\begin{align*}
\psi(u,T-t) &= u e^{\beta(T-t)}, \\
\phi(u,t,T) &=  \ u \, a(t,T)+ \frac{ (u \sigma)^2 }{4 \beta} \big( e^{2 \beta(T-t)} -1\big) + \sum_{i:s_i \in (t,T]} \log\Big( E\Big[ \exp( ue^{\beta(T-s_i)} \xi^i)\Big] \Big). 
\end{align*}
\end{proposition}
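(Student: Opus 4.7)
The plan is to exploit the explicit representation of $\rho_T$ provided by Lemma \ref{lem:rho} and compute the conditional characteristic function by factorizing into independent contributions.

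First I would substitute equation \eqref{eq:335} directly. Conditional on $\rho_t$, the term $\rho_t e^{\beta(T-t)} + \int_t^T e^{\beta(T-s)}\alpha(s)\,ds = \rho_t e^{\beta(T-t)} + a(t,T)$ is deterministic, so this contributes the factor $\exp\bigl(u\rho_t e^{\beta(T-t)} + u\,a(t,T)\bigr)$ to the conditional expectation. The remaining two terms, namely the Itô integral $I := \sigma e^{\beta T}\int_t^T e^{-\beta s}dW_s$ and the jump sum $\Xi := \sum_{i:s_i\in(t,T]} \xi^i e^{\beta(T-s_i)}$, are independent of $\rho_t$ and mutually independent by the standing assumption that $W$ is independent of $\{\xi^i\}$, and the $\xi^i$ are independent among themselves.

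Next I would evaluate each factor separately. The Itô integral $I$ is a centered Gaussian random variable (deterministic integrand) with variance
\begin{equation*}
\Var(I) = \sigma^2 e^{2\beta T}\int_t^T e^{-2\beta s}\,ds = \frac{\sigma^2}{2\beta}\bigl(e^{2\beta(T-t)}-1\bigr).
\end{equation*}
For $u\in{\rm i}\R$, the standard Gaussian characteristic function gives $E[e^{uI}] = \exp\bigl(\tfrac{u^2}{2}\Var(I)\bigr) = \exp\bigl(\tfrac{(u\sigma)^2}{4\beta}(e^{2\beta(T-t)}-1)\bigr)$. For the jump part, independence of the $\xi^i$ yields
\begin{equation*}
E[e^{u\Xi}] = \prod_{i:s_i\in(t,T]} E\bigl[\exp(u e^{\beta(T-s_i)}\xi^i)\bigr],
\end{equation*}
which is well-defined since $u\in{\rm i}\R$ ensures each expectation is a value of the characteristic function of $\xi^i$, hence finite of modulus at most one.

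Finally I would multiply the three factors, take the logarithm of the jump factor to put it in exponential form, and read off that the expression matches $\exp\bigl(\phi(u,t,T) + \psi(u,T-t)\rho_t\bigr)$ with $\psi(u,T-t) = u e^{\beta(T-t)}$ and $\phi(u,t,T)$ as stated. There is essentially no obstacle here: the independence structure built into the model and the explicit affine-linear form of $\rho_T$ as a function of $\rho_t$ reduce everything to a one-line computation, with the only mild subtlety being to invoke the independence of $W$ and $\{\xi^i\}$ (and of $\rho_t$ from future noise, which follows since $I$ and $\Xi$ involve only $W_s - W_t$ for $s\in(t,T]$ and jumps at times $s_i>t$).
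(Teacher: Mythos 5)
Your proposal is correct and follows essentially the same route as the paper: substitute the explicit representation \eqref{eq:335}, compute the Gaussian factor from the variance of the deterministic-integrand It\^o integral, and factor the jump contribution using the independence of the $\xi^i$. Your write-up is actually slightly more careful than the paper's (making explicit the variance computation and the independence of the increments from $\rho_t$), but the argument is the same.
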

\begin{proof}
From representation \eqref{eq:335}, we directly obtain that
\begin{align*}
E\big[e^{u \rho_T} \big|\rho_t \big] &=
\exp\big(  u e^{\beta(T-t)} \rho_t  \big) \cdot \exp\bigg(  u \int_t^T e^{\beta(t-s)} \alpha(s) ds + \frac{u^2\sigma^2}{4 \beta} \Big(  e^{2\beta(T-t)}-1 \Big)  \bigg)\notag \\
&\quad \cdot E\bigg[\exp\bigg( u \sum_{i=1}^M \ind{s_i \in (t,T]} \xi^i e^{\beta(T-s_i)} \bigg) \bigg]. 
\end{align*}
This immediately yields that $ \psi(u,T-t) = u \exp(\beta(T-t))$. The result of the proposition  follows by relying on the independence of the random variables  $\{\xi^i;i=1,\ldots,M\}$.
\end{proof}

Explicit expressions for bond prices can be obtained either by applying the general result from Proposition \ref{prop:Y} or by a direct computation of the conditional characteristic function of the time-integral $\int_0^T \rho_t\,\eta(dt)$, as illustrated in the next subsection under a specific assumption on the distribution of the jump sizes.

\subsection{A Gaussian Hull-White model for the overnight rate}  \label{sec:gaussian}

Until the end of this section we assume that the random variables $\{\xi^i\colon i=1,\ldots,M\}$ are independent and normally distributed. In this case, we immediately see from \eqref{eq:335} that $\rho$ is a Gaussian process (more precisely, a Markov process with Gaussian increments). Moreover, the random variable $R_T:=\int_0^T \rho_t \,\eta(dt)$ is also normally distributed for all $T>0$. This immediately yields the following proposition.
 
\begin{proposition}
\label{prop:bond-price-Gaussian}
For all $0\leq t \leq T$, it holds that
\begin{align*}
P(t,T) = \exp\Big(  - E[R_T|\rho_t,R_t] +R_t  + \frac 1 2 \Var(R_T | \rho_t, R_t )  \Big).
\end{align*}
\end{proposition}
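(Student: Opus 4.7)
The plan is to apply the risk-neutral pricing formula for zero-coupon bonds derived in Proposition \ref{prop:backward-forward-rates} and then exploit the Gaussian structure explicitly. By that proposition, assuming $S^0$-discounted bond prices are true martingales under $Q$, we have
\[
P(t,T) = E\bigl[e^{-\int_{(t,T]} \rho_s \eta(ds)}\,\big|\,\cF_t\bigr] = E\bigl[e^{-(R_T-R_t)}\,\big|\,\cF_t\bigr], \qquad 0\le t\le T,
\]
so the claim amounts to computing this conditional expectation in closed form under the Gaussian Hull-White specification.

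First, I would reduce the conditioning. By Proposition \ref{prop:Y} applied to the specification \eqref{eq:SOFR_affine}--\eqref{eq:drho} (with $X=\rho$), the augmented process $(\rho,R)$ is an affine semimartingale, hence in particular Markov. Moreover $R_t$ is $\cF_t$-measurable, so
\[
E\bigl[e^{-(R_T-R_t)}\,\big|\,\cF_t\bigr]
= e^{R_t}\,E\bigl[e^{-R_T}\,\big|\,\rho_t,R_t\bigr].
\]
Next, from the explicit representation \eqref{eq:335} in Lemma \ref{lem:rho}, integrating over $[0,T]$ against $\eta$ and using Fubini shows that $R_T$ is an affine function of $\rho_t$, $R_t$, the Wiener increments $\{W_u-W_t:u\in[t,T]\}$, and the Gaussian jump sizes $\{\xi^i : s_i\in(t,T]\}$, which are all independent of $\cF_t$ apart from their conditioning on $\rho_t$ and $R_t$. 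Since the $\xi^i$ are independent normal and the Brownian integrals are Gaussian, the conditional law of $R_T$ given $(\rho_t,R_t)$ is Gaussian.

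Applying the moment generating function of a Gaussian random variable then yields
\[
E\bigl[e^{-R_T}\,\big|\,\rho_t,R_t\bigr]
= \exp\Bigl(-E[R_T\,|\,\rho_t,R_t]+\tfrac{1}{2}\Var(R_T\,|\,\rho_t,R_t)\Bigr),
\]
and multiplying by $e^{R_t}$ gives exactly the asserted formula.

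The only delicate points are: (a) ensuring that $Q$ is a risk-neutral measure in the sense of Section \ref{sec:classical}, i.e.\ that $P(\cdot,T)/S^0$ is a true martingale so Proposition \ref{prop:backward-forward-rates} applies — this follows because the Gaussian exponential moments of $R_T$ are finite under the present assumptions; and (b) justifying the reduction from $\cF_t$ to $(\rho_t,R_t)$, which I would do by invoking the Markov property of $(\rho,R)$ from Proposition \ref{prop:Y} together with the independence of the jump-size variables $\{\xi^i\}$ from $\cF_t$ at times $s_i>t$. Beyond these structural remarks the remainder is a routine Gaussian MGF computation and needs no further elaboration.
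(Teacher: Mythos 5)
Your argument is exactly the one the paper has in mind: Proposition \ref{prop:backward-forward-rates} gives $P(t,T)=E[e^{-(R_T-R_t)}\,|\,\cF_t]$, the representation \eqref{eq:335} shows $R_T$ is conditionally Gaussian given $(\rho_t,R_t)$ with the increments independent of $\cF_t$, and the Gaussian moment generating function yields the formula; the paper simply states that this ``immediately yields'' the proposition without writing out the steps. Your additional remarks on the true-martingale property and the Markov reduction are correct and only make explicit what the paper leaves implicit.
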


We proceed to compute explicitly the conditional mean and variance of $R_T$ appearing in Proposition \ref{prop:bond-price-Gaussian}.  The proofs of the following two lemmata are based on rather lengthy computations and, therefore, are deferred to the Appendix. As a first step, we compute the conditional mean and covariance of $\rho$.
We denote $m_i := E[\xi_i]$ and $\gamma_i^2 := \Var(\xi_i)$, for each $i=1,\ldots,M$. 

\begin{lemma}
\label{lemma:A1}
For all $0\leq t \leq T$ and  $0\leq t \leq T_1, T_2$, let us denote 	 
\begin{align*}
m(t,T)  & := E[\rho_T| \rho_t] \quad  \textrm{and} \quad c(t,T_1,T_2)  := \Cov(\rho_{T_1},\rho_{T_2} |\rho_t).  
\end{align*}
Then, it holds that
\begin{align} 	\label{mean1}
\begin{aligned}
m(t,T) &  = \rho_t e^{\beta(T-t)} + a(t,T)  +\sum_{i=1}^M \ind{s_i \in (t,T]} m_i e^{\beta(T-s_i)}, \\
c(t,T_1,T_2)&= \frac{\sigma^2e^{\beta(T_1+T_2)}}{2 \beta}\Big( e^{-2\beta t} - e^{- 2 \beta(T_1 \wedge T_2)} \Big)  +\sum_{i=1}^M \ind{s_i \in (t,T_1 \wedge T_2] } \gamma_i^2 e^{ \beta (T_1 + T_2-2s_i)}.	
\end{aligned}
\end{align}
\end{lemma}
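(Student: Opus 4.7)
The starting point is the explicit representation \eqref{eq:335} from Lemma \ref{lem:rho}, which decomposes $\rho_T$ into a $\rho_t$-measurable drift term, a deterministic integral $a(t,T)$, a Wiener integral of a deterministic integrand over $(t,T]$, and a sum of scaled jumps occurring strictly after $t$. Since the Brownian increments $(W_s-W_t)_{s\geq t}$ and the jump variables $\{\xi^i : s_i>t\}$ are independent of $\cF_t$ (and in particular of $\rho_t$), conditioning on $\rho_t$ affects only the first two (deterministic-in-$\rho_t$) terms, while the remaining two contributions can be treated as unconditional random variables.

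For the mean formula, I will take conditional expectations term by term in \eqref{eq:335}: the first term contributes $\rho_t e^{\beta(T-t)}$, the second $a(t,T)$, the Wiener integral vanishes in expectation, and the jump sum contributes $\sum_{i: s_i\in(t,T]} m_i e^{\beta(T-s_i)}$ by independence of the $\xi^i$'s. This yields \eqref{mean1}.

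For the covariance, I write $\rho_{T_j} - m(t,T_j)$ for $j=1,2$ and read off from \eqref{eq:335} that the centered remainder splits as
\begin{equation*}
\rho_{T_j} - m(t,T_j) \;=\; \sigma e^{\beta T_j}\!\int_t^{T_j}\! e^{-\beta s}\,dW_s \;+\; \sum_{i:\,s_i\in(t,T_j]} (\xi^i-m_i)\,e^{\beta(T_j-s_i)}.
\end{equation*}
Cross-covariances between the Brownian and jump parts vanish because $W$ and $\{\xi^i\}$ are independent. The Brownian–Brownian covariance is computed by the Itô isometry applied on the interval $(t,T_1\wedge T_2]$, giving $\sigma^2 e^{\beta(T_1+T_2)}\!\int_t^{T_1\wedge T_2}\! e^{-2\beta s}\,ds = \frac{\sigma^2 e^{\beta(T_1+T_2)}}{2\beta}(e^{-2\beta t}-e^{-2\beta(T_1\wedge T_2)})$. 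The jump–jump covariance reduces, by pairwise independence of the $\xi^i$'s, to the diagonal sum $\sum_{i:\,s_i\in(t,T_1\wedge T_2]}\gamma_i^2\, e^{\beta(T_1+T_2-2s_i)}$, where only indices with $s_i$ in the intersection $(t,T_1]\cap(t,T_2]=(t,T_1\wedge T_2]$ appear in both sums. Combining these gives the stated formula for $c(t,T_1,T_2)$.

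There is essentially no conceptual obstacle here; the computation is pure bookkeeping once \eqref{eq:335} and the independence structure are in hand. The only minor care needed is to justify that conditioning on $\rho_t$ (rather than on the full $\cF_t$) is legitimate: this follows because the Wiener-integral term over $(t,T]$ and the post-$t$ jumps are independent of $\cF_t$ and hence of $\rho_t$, so their contributions to the conditional mean and covariance coincide with the unconditional ones. Since the lemma statement already assumes $0\le t\le T_1,T_2$ without ordering, I will phrase the covariance calculation after reducing to the case $T_1\le T_2$ (the symmetric case being identical) so that $T_1\wedge T_2$ can be substituted directly into the Itô-isometry upper limit.
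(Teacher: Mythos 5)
Your proposal is correct and follows essentially the same route as the paper's proof: both start from the explicit solution \eqref{eq:335}, obtain the conditional mean by taking expectations term by term, and compute the covariance by splitting into the Brownian part (handled via the It\^o isometry on $(t,T_1\wedge T_2]$) and the jump part (handled via independence of the $\xi^i$'s), with cross terms vanishing by independence of $W$ and the jumps. Your additional remarks on why conditioning on $\rho_t$ rather than $\cF_t$ is legitimate are a minor refinement the paper leaves implicit, but they do not change the argument.
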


By integrating equation \eqref{eq:335} and applying Fubini theorems, we obtain
\begin{equation} \label{eq:RS_explicit}
R_T = R_t + \rho_t B'(t,T) + A'(t,T) + \sigma\int_{t}^{T}B'(s,t)dW_s   
 + \sum_{i=1}^M\ind{s_i\in(t,T]}\bigg(B'(s_i,T)+{\sum_{j=1}^N\ind{s_i=t_j}}\bigg)\xi_i,
\end{equation}
for all $0\leq t\leq T$, where we make use of the notation
\begin{align}
B'(t,T):=\int_{(t,T]}e^{\beta(u-t)}\eta(du)
&= \frac{e^{\beta(T-t)}-1}{\beta}+\sum_{j=1}^N\ind{t_j\in(t,T]}e^{\beta(t_j-t)} \nonumber \\
&=: B(T-t)+\sum_{j=1}^N\ind{t_j\in(t,T]}e^{\beta(t_j-t)} \label{eq:B}
\end{align}
and
\begin{align*}
A'(t,T):=\int_{(t,T]}a(t,u)\eta(du)
&= \int_t^Ta(t,u)du + \sum_{j=1}^N\ind{t_j\in(t,T]}a(t,t_j) \\
&=: A(t,T) + \sum_{j=1}^N\ind{t_j\in(t,T]}a(t,t_j).
\end{align*}

By relying on \eqref{eq:RS_explicit} and on the properties of the random variables $\{\xi_i\colon i=1,\ldots,M\}$, the next lemma gives explicit expressions for the conditional expectation and variance of $R_T$. 
We denote 
\[
\bar{B}(t,T) := 
\frac{e^{2\beta (T-t)}- 1}{2\beta} + \sum_{j=1}^N \ind{t_j \in (t,T]} e^{2\beta (t_j-t)},
\qquad\text{ for all }0\leq t \leq T.
\]

\begin{lemma}
\label{lemma:A2,3}
For all $0\leq t\leq T$, it holds that
	\begin{align}
		E[R_T | \rho_t,R_t] 
		&  = R_t 
		 + \rho_t \left(B(T-t)+\sum_{j=1}^N \ind{t_j \in (t,T]}e^{\beta(t_j-t)}\right)  \notag \\
		 &+ A(t,T) + \sum_{j=1}^N \ind{t_j \in (t,T]}  a(t,t_j) \notag\\
		 &+	\sum_{i=1}^M \ind{s_i \in (t,T]} m_i\left(B(T-s_i)+\sum_{j=1}^N \ind{t_j\in[s_i,T]} e^{\beta(t_j-s_i)}\right),  \label{eq25:app} \\
		 \Var(R_T | \rho_t, R_t ) &=		
\frac{\sigma^2}{2 \beta} B'(t,T)^2
- \frac{\sigma^2}{\beta}\left( \frac{B'(t,T)-(T-t)}{\beta}+\sum_{j=1}^N\ind{t_j\in(t,T]}\Bigl(B'(t_j,T)+\frac{1}{2}-\frac{1}{\beta}\Bigr)\right)	\notag\\
& + 2\sum_{i=1}^M\ind{s_i\in(t,T]}\gamma_i^2\left(\frac{\bar{B}(s_i,T)-B'(s_i,T)}{\beta}+\sum_{j=1}^N\ind{t_j\in[s_i,T]}e^{2\beta(t_j-s_i)}\Bigl(B'(t_j,T)+\frac{1}{2}\Bigr)\right). \notag
\end{align}
\end{lemma}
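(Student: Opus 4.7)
The starting point is the explicit representation \eqref{eq:RS_explicit} of $R_T - R_t$, which decomposes into an $(\rho_t, R_t)$-measurable deterministic piece $\rho_t B'(t,T) + A'(t,T)$; a Brownian integral $\sigma \int_t^T B'(s,T)\,dW_s$ over the future; and a sum of independent jumps $\sum_{i:\, s_i \in (t,T]} \tilde{B}(s_i,T)\,\xi_i$, with coefficient
\[
\tilde{B}(s_i,T) := B'(s_i,T) + \sum_{j=1}^N \ind{s_i = t_j} = B(T-s_i) + \sum_{j=1}^N \ind{t_j \in [s_i,T]}\, e^{\beta(t_j - s_i)}.
\]
By assumption the $\xi_i$ are jointly independent and independent of $W$, so these three pieces are mutually independent and the last two are independent of $\cF_t$.

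For the conditional mean I apply $E[\,\cdot\,\mid \rho_t, R_t]$ to \eqref{eq:RS_explicit}: the Brownian integral is centered and independent of $(\rho_t, R_t)$, each $\xi_i$ is replaced by $m_i$, and expanding $B'(t,T)$, $A'(t,T)$ and $\tilde{B}(s_i,T)$ into their continuous and atomic parts yields \eqref{eq25:app} term by term.

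For the conditional variance, independence of the three stochastic pieces gives
\[
\Var(R_T \mid \rho_t, R_t) = \sigma^2 \int_t^T B'(s,T)^2\,ds + \sum_{i=1}^M \ind{s_i \in (t,T]} \gamma_i^2\, \tilde{B}(s_i,T)^2,
\]
so the task reduces to evaluating these two pieces and matching them with the closed form in the statement. For the Brownian contribution I expand $B'(s,T)^2 = B(T-s)^2 + 2 B(T-s)\sum_j \ind{t_j\in(s,T]} e^{\beta(t_j-s)} + \bigl(\sum_j \ind{t_j\in(s,T]} e^{\beta(t_j-s)}\bigr)^2$ and integrate over $[t,T]$ piece by piece, using the explicit antiderivative of $B(T-s)^2$ and swapping sums with integrals in the mixed and pure-atom terms so that each $t_j \in (t,T]$ produces a contribution proportional to $B'(t_j,T) + \tfrac{1}{2} - \tfrac{1}{\beta}$. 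Regrouping these pieces against the analogous expansion of $B'(t,T)^2$ collapses the Brownian part into $\tfrac{\sigma^2}{2\beta} B'(t,T)^2 - \tfrac{\sigma^2}{\beta}(\cdots)$.

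For the jump variance I expand $\tilde{B}(s_i,T)^2$ in the same way, exploiting the key elementary identity
\[
B(T-s_i)^2 = \frac{(e^{\beta(T-s_i)}-1)^2}{\beta^2} = \frac{2}{\beta}\left[\frac{e^{2\beta(T-s_i)}-1}{2\beta} - \frac{e^{\beta(T-s_i)}-1}{\beta}\right],
\]
which is precisely twice the continuous part of $(\bar{B}(s_i,T) - B'(s_i,T))/\beta$ and explains the overall factor $2$ and the combination of $\bar{B}$ and $B'$ in the final expression. The mixed cross-terms and the pure-atom square, combined with the atomic pieces of $\bar{B}$ and $B'$, collapse under careful bookkeeping over the atoms $t_j \in [s_i,T]$ into $2 \sum_j \ind{t_j \in [s_i,T]} e^{2\beta(t_j - s_i)} \bigl(B'(t_j,T) + \tfrac{1}{2}\bigr)$, yielding the second summand in the claimed formula. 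The main obstacle is this final reorganization: matching the telescoping sums over the atoms $\{t_j\}$ and the correct placement of the indicator $\ind{t_j \in [s_i,T]}$ versus $\ind{t_j \in (s_i,T]}$ is the source of the lengthy bookkeeping that motivates deferring the detailed calculation to the appendix.
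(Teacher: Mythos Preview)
Your argument is correct and reaches the same formulas, but by a different route than the paper. The paper does not work from \eqref{eq:RS_explicit} at all; instead it writes $E[R_T\mid\rho_t,R_t]=R_t+\int_{(t,T]}m(t,u)\,\eta(du)$ and $\Var(R_T\mid\rho_t,R_t)=\dinttT c(t,u,v)\,\eta(dv)\,\eta(du)$, pulling in the pointwise conditional mean and covariance of $\rho$ from Lemma~\ref{lemma:A1} and then evaluating the $\eta$-integrals by splitting each into its Lebesgue and atomic parts. Your approach bypasses Lemma~\ref{lemma:A1} entirely: you take the decomposition \eqref{eq:RS_explicit} as already having done the Fubini work, read off the mean by replacing $\xi_i$ by $m_i$, and get the variance from It\^o isometry plus independence as $\sigma^2\int_t^T B'(s,T)^2\,ds+\sum_i\gamma_i^2\tilde{B}(s_i,T)^2$. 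This is more direct and makes the structure of the variance transparent before any algebra; the paper's double-integral route is more symmetric but requires the same amount of bookkeeping in the end (the manipulations around \eqref{eq:var2} in the appendix mirror your expansion of $B'(s,T)^2$). Either way the heavy lifting is the same reorganization of atomic sums you describe.
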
 

\begin{remark}[Impact of expected jump dates]
While $\xi_i$ is the jump in the overnight rate at date $s_i$, its impact on bond prices is modulated by $B(T-s_i)+\sum_{j=1}^N \ind{t_j \in (s_i,T]}e^{\beta(t_j-s_i)}$ in \eqref{eq25:app}, with $B(T-s_i)$ defined in \eqref{eq:B}. In particular, since bond prices determine forward term rates, this allows generating forward term rates that exhibit less pronounced stochastic discontinuities than the overnight rate, in line with the empirical evidence in post-Libor markets based on overnight rates.
\end{remark}

\begin{remark}
In the present affine setup, the availability of exponentially affine  formulas for bond prices implies that the backward-looking rate $R(S,T)$ admits a tractable representation, being defined in \eqref{eq:RT_i} as a product of bond prices. This highlights the fact that our setting enables us to work with the exact definition of the backward-looking rate, while retaining complete analytical tractability.
\end{remark}

We close this section by deriving an explicit formula for the price of a caplet in the context of the present Gaussian Hull-White model. 
In the post-Libor universe one may consider two distinct types of caplets, depending on whether the payoff is determined by the backward-looking rate $R(S,T)$ or by the forward-looking rate $F(S,T)$, see for instance \cite{MercurioLyashenko19,Fontana22}. For illustration, we consider here a forward-looking caplet, whose payoff at date $T$ is given by
\[
H = (T-S)\bigl(F(S,T)-K\bigr)^+,
\]
for some $K>0$. We recall from Section \ref{sec:fwd_rates} that $F(S,T)=R(S,S,T)=(1/P(S,T)-1)/(T-S)$.
In view of Proposition \ref{prop:bond-price-Gaussian} and Lemma \ref{lemma:A2,3}, we have that:
\begin{equation}	\label{eq:bond_repr}
P(t,S) 
= e^{-\rho_t B'(t,S) - \Xi(t,S)},
\qquad\text{ for all }0\leq t\leq S,
\end{equation}
where, for brevity of notation, $\Xi(t,S)$ collects all terms appearing in Lemma \ref{lemma:A2,3} that do not multiply $\rho_t$.
For the determination of the caplet price, we need to compute the $\cF_t$-conditional distribution of $\rho_S$ under the $S$-forward measure $Q^S$ defined by $dQ^S/dQ=1/(S^0_SP(0,S))$. 

\begin{lemma}	\label{lem:fwd_meas}
Under the $S$-forward measure $Q^S$, the following hold:
\begin{enumerate}
\item the process $W^S=(W^S_t)_{t\in[0,S]}$ defined by $W^S_t:=W_t+\sigma\int_0^tB'(u,S)du$, for all $t\in[0,S]$, is a Brownian motion;
\item for each $i=1,\ldots,M$, it holds that
\[
\xi_i\sim\ccN\left(m_i-\ind{s_i\leq S}\gamma^2_i\Bigl(B(S-s_i)+\sum_{j=1}^N\ind{t_j\in[s_i,S]}e^{\beta(t_j-s_i)}\Bigr),\gamma^2_i\right).
\]
Moreover, the random variables $\{\xi_i \colon i=1,\ldots,M\}$ are  mutually independent and independent of the Brownian motion $W^S$.
\end{enumerate}
\end{lemma}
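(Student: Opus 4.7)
The plan is to derive an explicit multiplicative decomposition of the Radon--Nikodym density process $Z_t := (dQ^S/dQ)|_{\cF_t} = P(t,S)/(S^0_t P(0,S))$ and then apply Girsanov's theorem together with a classical exponential tilting of Gaussians. Starting from the bond-price formula \eqref{eq:bond_repr} and the SDE \eqref{eq:drho}, the continuous part of $d\log P(t,S)$ contributes only $-\sigma B'(t,S)dW_t$ in the Brownian direction (the finite-variation drift being pinned down by the $Q$-martingale property of $Z$), while at each $s_i\in\cS$ the multiplicative jump equals
\[
\frac{Z_{s_i}}{Z_{s_i-}} = \frac{P(s_i,S)}{P(s_i-,S)} = e^{-B'(s_i,S)(\xi_i-m_i)-\tfrac12 B'(s_i,S)^2\gamma_i^2} = \frac{e^{-B'(s_i,S)\xi_i}}{E^Q[e^{-B'(s_i,S)\xi_i}]},
\]
where I use $S^0_{s_i}=S^0_{s_i-}$ (since $\cS\cap\cT=\emptyset$) together with the Gaussian moment generating function. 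Combining the continuous and jump parts yields the factorization
\[
Z_t = \mathcal{E}\Bigl(-\sigma\int_0^\cdot B'(s,S)\,dW_s\Bigr)_t \cdot \prod_{i:\,s_i\le t\wedge S}\frac{e^{-B'(s_i,S)\xi_i}}{E^Q[e^{-B'(s_i,S)\xi_i}]},
\]
whose key feature is that the continuous factor depends only on $W$ and the $i$-th jump factor only on $\xi_i$, while $W$ and $\{\xi_i\}_{i=1}^M$ are $Q$-independent.

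For part (i), the continuous local martingale part of the stochastic logarithm of $Z$ has quadratic covariation with $W$ equal to $-\sigma\int_0^\cdot B'(u,S)\,du$, so Girsanov's theorem (\cite[Theorem III.3.24]{JacodShiryaev}) immediately gives that $W^S=W+\sigma\int_0^\cdot B'(u,S)\,du$ is a standard Brownian motion on $[0,S]$ under $Q^S$. For part (ii), fix $i$ with $s_i\le S$ and a bounded Borel $\phi$; using the factorization above together with the $Q$-independence of $W$ and $\{\xi_j\}_{j\ne i}$, all factors of $Z_S$ other than the $i$-th jump term integrate to $1$ when conditioned on $\cF_{s_i-}$, so
\[
E^{Q^S}[\phi(\xi_i)\mid\cF_{s_i-}] = \frac{E^Q[e^{-B'(s_i,S)\xi_i}\phi(\xi_i)]}{E^Q[e^{-B'(s_i,S)\xi_i}]}.
\]
This is the classical exponential tilting of an $\ccN(m_i,\gamma_i^2)$ law by $e^{-B'(s_i,S)\xi_i}$, which shifts the mean by $-B'(s_i,S)\gamma_i^2$ and leaves the variance unchanged. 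Substituting $B'(s_i,S)=B(S-s_i)+\sum_{j=1}^N\ind{t_j\in(s_i,S]}e^{\beta(t_j-s_i)}$ and using $\ind{t_j\in[s_i,S]}=\ind{t_j\in(s_i,S]}$ (valid since $\cS\cap\cT=\emptyset$) recovers the stated law. When $s_i>S$ the variable $\xi_i$ does not appear in $Z_S$, so its $Q^S$-law coincides with its $Q$-law; this is precisely what the indicator $\Ind_{\{s_i\le S\}}$ encodes. Mutual independence of $W^S$ and the $\{\xi_i\}$ under $Q^S$ then follows by writing $E^{Q^S}[\phi(W^S)\prod_i\phi_i(\xi_i)] = E^Q[Z_S\phi(W^S)\prod_i\phi_i(\xi_i)]$, using that $W^S$ is a deterministic shift of $W$, and applying Fubini to the product structure of $Z_S$ combined with $E^Q[Z^W_S]=1$ and $E^Q[Z^i]=1$ for every factor.

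The main technical obstacle is justifying the multiplicative decomposition of $Z$ in the first step: one has to separate the continuous and purely discontinuous parts of $d\log P(t,S)$, confirm that the drift of the continuous part of $Z$ vanishes (a consequence of the risk-neutral construction of $P(\cdot,S)$ via Proposition \ref{prop:bond-price-Gaussian}), and track the normalizing constants $E^Q[e^{-B'(s_i,S)\xi_i}]$ so that each jump factor is itself a $Q$-martingale increment. Once this decomposition is in place, the Girsanov step and the tilting step reduce to essentially one-line arguments.
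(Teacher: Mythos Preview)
Your argument is correct, but the paper takes a shorter and slightly different route. Instead of decomposing the density \emph{process} $Z_t=P(t,S)/(S^0_tP(0,S))$ into a Brownian stochastic exponential and jump factors, the paper works directly with the terminal density $dQ^S/dQ\propto e^{-R_S}$ and computes the joint characteristic function of $(\xi_1,\ldots,\xi_M)$ under $Q^S$ conditionally on $\cF^W:=\sigma(W_u;u\in[0,S])$. Using the explicit representation \eqref{eq:RS_explicit} of $R_S$, the $W$-measurable part of $e^{-R_S}$ cancels in the ratio $E[e^{-R_S+\im\sum_i\lambda_i\xi_i}|\cF^W]/E[e^{-R_S}|\cF^W]$, and independence of the $\xi_i$'s under $Q$ immediately yields a product of exponentially tilted Gaussian characteristic functions. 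This delivers the shifted means, the unchanged variances, the mutual independence, and the independence from $W^S$ in a single computation, without ever writing down the dynamics of $Z$.

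Two consequences of this difference are worth noting. First, the paper's approach bypasses the ``main technical obstacle'' you identify: there is no need to verify the continuous drift of $Z$ or to normalise the jump factors, because one never leaves the level of the terminal density. Second, your argument relies on $\cS\cap\cT=\emptyset$ (to get $S^0_{s_i}=S^0_{s_i-}$ and $\ind{t_j\in[s_i,S]}=\ind{t_j\in(s_i,S]}$), an assumption that is not imposed in Section~\ref{sec:Hull_White}. The paper's computation handles the general case automatically, since the coefficient of $\xi_i$ in $R_S$ coming from \eqref{eq:RS_explicit} is $B'(s_i,S)+\sum_j\ind{s_i=t_j}$, which collapses exactly to the closed-interval sum $B(S-s_i)+\sum_j\ind{t_j\in[s_i,S]}e^{\beta(t_j-s_i)}$ appearing in the statement. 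Your decomposition approach would still go through in the general case, but you would have to carry the extra $\ind{s_i\in\cT}$ contribution from the jump of $S^0$ and of $B'(\cdot,S)$ at $s_i$.
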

\begin{proof}
Part (i) follows by a standard application of Girsanov's theorem. To prove part (ii), it suffices to consider an arbitrary $\lambda\in\R^M$ and compute the joint characteristic function of the random variables $\{\xi_i\colon i=1,\ldots,M\}$ under the measure $Q^S$ conditionally on the sigma-field $\cF^W:=\sigma(W^S_u;u\in[0,T])$. 
Observe that $\cF^W=\sigma(W_u;u\in[0,T])$.
Denoting by $E^S$ the expectation under $Q^S$ and making use of equation \eqref{eq:RS_explicit}, it holds that
\begin{align*}
E^S\bigl[e^{\im\sum_{i=1}^M\lambda_i\xi_i}|\cF^W\bigr]
&= \frac{E\bigl[e^{-R_S+\im\sum_{i=1}^M\lambda_i\xi_i}|\cF^W\bigr]}{E[e^{-R_S}|\cF^W]}	\\
&= \prod_{i=1}^M\frac{E\bigl[e^{(\im\lambda_i-\ind{s_i\leq S}(B(S-s_i)+\sum_{j=1}^N\ind{t_j\in[s_i,S]}e^{\beta(t_j-s_i)}))\xi_i}\bigr]}{E\bigl[e^{-\ind{s_i\leq S}(B(S-s_i)+\sum_{j=1}^N\ind{t_j\in[s_i,S]}e^{\beta(t_j-s_i)})\xi_i}\bigr]}	\\
&= \prod_{i=1}^Me^{\im\lambda_i(m_i-\ind{s_i\leq S}\gamma^2_i(B(S-s_i)+\sum_{j=1}^N\ind{t_j\in[s_i,S]}e^{\beta(t_j-s_i)}))-\lambda_i^2\frac{\gamma^2_i}{2}},
\end{align*}
where we have used the independence of the random variables $\{\xi_i \colon i=1,\ldots,M\}$ together with the fact that $\xi\sim\ccN(m_i,\gamma_i^2)$, for each $i=1,\ldots,M$.
\end{proof}

As a consequence of Lemma \ref{lem:fwd_meas} and Proposition \ref{lem:rho}, it holds that
\begin{equation}	\label{eq:rho_fwd}
\rho_S\sim\ccN\bigl(\rho_te^{\beta(S-t)}+\Gamma_1(t,S),\Gamma_2(t,S)\bigr)
\qquad\text{ under }Q^S,
\end{equation}
conditionally on $\cF_t$, for $t\in[0,S]$, where the quantities $\Gamma_1(t,S)$ and $\Gamma_2(t,S)$ are defined as
\begin{align*}
\Gamma_1(t,S) &:= \int_t^S\bigl(\alpha(s)-\sigma B'(s,S)\bigr)e^{\beta(S-s)}ds    \\
&\quad +\sum_{i=1}^M\ind{s_i\in(t,S]}\Bigl(m_i-\gamma^2_i\Bigl(B(S-s_i)+\sum_{j=1}^N\ind{t_j\in[s_i,S]}e^{\beta(t_j-s_i)}\Bigr)\Bigr)e^{\beta(S-s_i)},	\\
\Gamma_2(t,S) &:= \sigma^2\int_t^Se^{2\beta(S-s)}ds
+\sum_{i=1}^M\ind{s_i\in(t,S]}\gamma^2_ie^{2\beta(S-s_i)}
= \sigma^2\frac{e^{2\beta(S-t)}-1}{2\beta}+\sum_{i=1}^M\ind{s_i\in(t,S]}\gamma^2_ie^{2\beta(S-s_i)}.
\end{align*}

The following proposition gives the arbitrage-free price of a forward-looking caplet. We denote by $\Phi$ the cumulative distribution function of a $\ccN(0,1)$ random variable.

\begin{proposition}	\label{prop:caplet_price}
Consider a forward-looking caplet delivering payoff $H=(T-S)(F(S,T)-K)^+$ at date $T$, for $S\in[0,T]$ and $K>0$. Its risk-neutral price $H_t$ at time $t\in[0,S]$ is given by
\[
H_t = G(\rho_t,t,S,T,K),
\]
where the function $G(\cdot,t,S,T,K)$ is given for all $x\in\R$ by
\[
G(x,t,S,T,K)
= e^{-xB'(t,S)-\Xi(t,S)}\left(\Phi(d_1(x))-K'e^{-\Xi(S,T)-B'(S,T)(xe^{\beta(S-t)}+\Gamma_1(t,S)-\frac{B'(S,T)\Gamma_2(t,S)}{2})}\Phi(d_2(x))\right),
\]
with $K':=1+(T-S)K$ and
\[
d_1(x) := \frac{-\log(K')+\Xi(S,T)}{B'(S,T)\sqrt{\Gamma_2(t,S)}}+\frac{xe^{\beta(S-t)}+\Gamma_1(t,S)}{\sqrt{\Gamma_2(t,S)}},
\qquad
d_2(x) := d_1(x)-B'(S,T)\sqrt{\Gamma_2(t,S)}.
\]
\end{proposition}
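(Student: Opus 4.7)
My plan is to reduce the caplet price to an expectation of a function of $\rho_S$ under the $S$-forward measure $Q^S$ and then evaluate it as a standard Black-style Gaussian integral.

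First I would rewrite the payoff using the risk-neutral formula $F(S,T)=(1/P(S,T)-1)/(T-S)$ from Section \ref{sec:classical}, obtaining
\[
H = \bigl(1/P(S,T)-K'\bigr)^+, \qquad K':=1+(T-S)K,
\]
which is a payoff known at time $S$ divided by $P(S,T)$. The usual change-of-numeraire trick for forward-looking caplets then applies: conditioning inside the risk-neutral expectation on $\cF_S$ and using $E^Q[S^0_S/S^0_T\mid\cF_S]=P(S,T)$ gives
\[
H_t = E^Q\!\left[\frac{S^0_t}{S^0_S}\bigl(1-K'P(S,T)\bigr)^+\Big|\cF_t\right]
= P(t,S)\,E^{Q^S}\!\left[\bigl(1-K'P(S,T)\bigr)^+\Big|\cF_t\right],
\]
where $Q^S$ is the $S$-forward measure. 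I would also use \eqref{eq:bond_repr} for $P(t,S)$ to factor out the prefactor $e^{-\rho_t B'(t,S)-\Xi(t,S)}$ that appears in the statement.

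Next, I would use the key observation that, by representation \eqref{eq:bond_repr} at time $S$, $P(S,T)=e^{-\rho_S B'(S,T)-\Xi(S,T)}$ is a deterministic function of $\rho_S$ alone (the term $\Xi(S,T)$ being deterministic, as inspected from Lemma \ref{lemma:A2,3}). Combined with Lemma \ref{lem:fwd_meas} and the $Q^S$-conditional law \eqref{eq:rho_fwd} of $\rho_S$, the random variable
\[
Z := \rho_S B'(S,T) + \Xi(S,T)
\]
is, conditionally on $\cF_t$, normal under $Q^S$ with mean $\mu_Z = B'(S,T)(\rho_t e^{\beta(S-t)}+\Gamma_1(t,S))+\Xi(S,T)$ and variance $\sigma_Z^2 = B'(S,T)^2\Gamma_2(t,S)$. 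The conditional expectation becomes $E^{Q^S}[(1-K'e^{-Z})^+\mid\cF_t]$.

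Finally, I would compute this via the standard normal identities: writing $\{Z>\log K'\}$ for the exercise event,
\[
E^{Q^S}\!\bigl[(1-K'e^{-Z})^+\bigm|\cF_t\bigr]
= \Phi\!\left(\tfrac{\mu_Z-\log K'}{\sigma_Z}\right)
-K'e^{-\mu_Z+\sigma_Z^2/2}\,\Phi\!\left(\tfrac{\mu_Z-\sigma_Z^2-\log K'}{\sigma_Z}\right).
\]
Substituting the expressions for $\mu_Z,\sigma_Z$ and identifying $d_1(\rho_t)$ and $d_2(\rho_t)=d_1(\rho_t)-B'(S,T)\sqrt{\Gamma_2(t,S)}$ as defined in the statement yields the claimed formula. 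I expect no serious obstacle: the computation is routine Gaussian algebra, with the only care needed being bookkeeping of the $\eta$-integrated coefficients $B'$, $\Xi$, and the mean and variance functions $\Gamma_1,\Gamma_2$ that incorporate the stochastic discontinuities at $\cS\cup\cT$, all of which are already at hand from Lemmas \ref{lemma:A1}, \ref{lemma:A2,3} and \ref{lem:fwd_meas}.
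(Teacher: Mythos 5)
Your proposal is correct and follows essentially the same route as the paper: rewrite the payoff as $(1/P(S,T)-K')^+$, pass to the $S$-forward measure to get $P(t,S)\,E^{Q^S}[(1-K'P(S,T))^+\mid\cF_t]$, and evaluate the resulting Gaussian integral using the $\cF_t$-conditional law \eqref{eq:rho_fwd} of $\rho_S$ under $Q^S$. The only difference is that you spell out the Black--Scholes-type computation that the paper invokes by reference, and your identification of $\mu_Z$, $\sigma_Z$, $d_1$, $d_2$ checks out against the stated formula.
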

\begin{proof}
Using the definition of the $S$-forward measure $Q^S$ and equation \eqref{eq:bond_repr}, we can compute
\begin{align*}
H_t
&= (T-S)E\left[\frac{S^0_t}{S^0_T}\bigl(F(S,T)-K\bigr)^+\bigg|\cF_t\right]	\\
&= E\left[\frac{S^0_t}{S^0_T}\left(\frac{1}{P(S,T)}-K'\right)^+\bigg|\cF_t\right]	
= K'E\left[\frac{S^0_t}{S^0_S}\left(\frac{1}{K'}-P(S,T)\right)^+\bigg|\cF_t\right]	\\
&= K'P(t,S)E^S\left[\left(\frac{1}{K'}-P(S,T)\right)^+\bigg|\cF_t\right]	
= K'P(t,S)E^S\left[\left(\frac{1}{K'}-e^{-\rho_SB'(S,T)-\Xi(S,T)}\right)^+\bigg|\cF_t\right].
\end{align*}
Under the $S$-forward measure $Q^S$, the $\cF_t$-conditional distribution of $\rho_S$ is given by \eqref{eq:rho_fwd}.
The result then follows by an application of the Black-Scholes formula.
\end{proof}

\section{Hedging in the presence of stochastic discontinuities}\label{hedging}

The presence of stochastic discontinuities may induce market incompleteness, in the sense that perfect replication of payoffs by means of self-financing strategies is not always possible. This is for instance the case of the affine model of Section \ref{sec:Hull_White}, which is affected by the jump risk generated by the process $J$. In this section, we aim at determining optimal hedging strategies in the sense of {\em local risk-minimization}. This corresponds to attaining perfect replication of payoffs while relaxing the self-financing requirement and minimizing the cost of the strategy according to a quadratic criterion (see \cite{Pham00} and \cite{Schweizer01} for an overview of the theory).
In Section \ref{sec:LRM} we provide a general description of local risk-minimization with stochastic discontinuities, while in Section \ref{sec:LRM_example} we study an explicit example in the context of the Gaussian Hull-White model of Section \ref{sec:gaussian}.

\subsection{Local risk-minimization with stochastic discontinuities}\label{sec:LRM}

In order to reduce the technicalities in the presentation and to focus on the impact of stochastic discontinuities, we assume the validity of the following assumption.
We consider a finite time horizon $T$.

\begin{assumption}\label{ass:MRP_simple}
There exists a family $(\xi_1,\ldots,\xi_M)$ of random variables on $(\Omega,\cF,Q)$ taking values in a measurable space $(B,\ccB(B))$ such that  $\xi_i$ is $\cF_{s_i}$-measurable, for each $i=1,\ldots,M$, and every local martingale $N=(N_t)_{t\in[0,T]}$ on $(\Omega,\FF,Q)$ admits a representation of the following form:
\begin{equation}	\label{eq:MRP_simple}
N = N_0 + \int_0^{\cdot}\theta_td W_t + \sum_{i=1}^Mf_i(\xi_i)\Ind_{\dbra{s_i,T}},
\end{equation}
where $\theta\in\Lloc([0,T])$ and $f_i(\cdot):\Omega\times B\to\R$ is a $(\cF_{s_i-}\otimes\ccB(B))$-measurable function such that $E[f_i(\xi_i)|\cF_{s_i-}] = 0$ a.s., for each $i=1,\ldots,M$.
\end{assumption}

Assumption \ref{ass:MRP_simple} is for instance satisfied in the model of Section \ref{sec:Hull_White} if the filtration $\FF$ is generated by the pair $(W,J)$. Note  that the assumption that the discontinuity dates $\cT$ do not appear in the martingale representation \eqref{eq:MRP_simple} is only made for simplicity of presentation.

We suppose that the market contains a traded security with $S^0$-discounted price process $X=(X_t)_{t\in[0,T]}$, assumed to be a special semimartingale with canonical decomposition
\begin{equation}	\label{eq:can_dec}
X = X_0 + A + M,
\end{equation}
where $A=(A_t)_{t\in[0,T]}$ is a predictable  process of finite variation and $M=(M_t)_{t\in[0,T]}$ a square-integrable martingale, with $A_0=M_0=0$. The process $X$ can represent for instance the price process of a SOFR future contract, at present the most liquid product referencing SOFR (see Section \ref{sec:LRM_example}).
Note also that in this section we do not necessarily assume that $Q$ is a risk-neutral measure.

As a consequence of Assumption \ref{ass:MRP_simple}, the martingale $M$ admits a representation of the form
\begin{equation}	\label{eq:MR}
M = \int_0^{\cdot}\eta_u dW_u 
+ \sum_{s_i\leq\cdot}\Delta M_{s_i},
\end{equation}
where $\eta=(\eta_t)_{t\in[0,T]}$ is a predictable process such that $E[\int_0^T\eta^2_udu]<\infty$ and $\Delta M_{s_i}=w_i(\xi_i)$, where the function $w_i$ is as in Assumption \ref{ass:MRP_simple}, for each $i=1,\ldots,M$.
We furthermore assume that $X$ has non-vanishing volatility, in the sense that $\eta_t>0$ a.s. for all $t\in[0,T]$.

By absence of arbitrage, there exists a predictable process $\lambda=(\lambda_t)_{t\in[0,T]}$ such that $A=\int_0^{\cdot}\lambda_ud\langle M\rangle_u$. In particular, this implies that $\Delta A_{s_i}=\lambda_{s_i}E[(\Delta M_{s_i})^2|\cF_{s_i-}]$, for all $i=1,\ldots,N$. We furthermore assume that the expected mean-variance tradeoff is finite, i.e., $E[\int_0^T\lambda^2_ud\langle M\rangle_u]<\infty$. This corresponds to assuming that $X$ satisfies the  {\em structure condition} (see \cite{Schweizer01}).

Let $H$ be a square-integrable $\cF_T$-measurable random variable, representing a discounted payoff. By market incompleteness, $H$ may not be attainable by self-financing trading. We then consider non-self-financing strategies attaining the payoff $H$, as formalized in the following definition, where we denote by $\Theta$ the set of all predictable processes $\zeta=(\zeta_t)_{t\in[0,T]}$ such that $E[\int_0^T\zeta^2_ud\langle M\rangle_u+(\int_0^T|\zeta_udA_u|)^2]<\infty$.

\begin{definition}	\label{def:strategy}
We call {\em $H$-admissible strategy} a pair $\varphi=(\zeta,V)$, where $\zeta=(\zeta_t)_{t\in[0,T]}\in\Theta$ and $V=(V_t)_{t\in[0,T]}$ is an adapted square-integrable process such that $V_T=H$ a.s.
We say that an $H$-admissible strategy $\varphi=(\zeta,V)$ is {\em locally risk-minimizing} if the associated cost process
\[
C_t(\varphi) := V_t-\int_0^t\zeta_u dX_u,
\qquad\text{ for all }t\in[0,T],
\]
is a square-integrable martingale strongly orthogonal to $M$. 
\end{definition}

In Definition \ref{def:strategy}, $\zeta_t$ and $V_t$ represent respectively the positions held in the traded security and the portfolio value at time $t$, for all $t\in[0,T]$.
By \cite[Theorem 3.3]{Schweizer01}, the definition of locally risk-minimizing strategy adopted in Definition \ref{def:strategy} is equivalent to the original definition of \cite{Schweizer91} if the process $A$ in \eqref{eq:can_dec} is continuous, as in the case of the example considered in Section \ref{sec:LRM_example}. For general $A$, Definition \ref{def:strategy} corresponds to the so-called {\em pseudo-locally risk-minimizing} strategy.

In view of \cite[Proposition 3.4]{Schweizer01}, finding a locally risk-minimizing strategy $\varphi=(\zeta,V)$ corresponds to obtaining a decomposition of the payoff $H$ of the form
 \begin{equation}	\label{eq:FS_dec}
H = H_0 + \int_0^T\zeta^H_udX_u + L^H_T,
\end{equation}
where $\zeta^H=(\zeta^H_t)_{t\in[0,T]}\in\Theta$ and $L^H=(L^H_t)_{t\in[0,T]}$ is a square-integrable martingale strongly orthogonal to $M$ with $L^H_0=0$. Decomposition \eqref{eq:FS_dec} is known as the  F\"ollmer-Schweizer decomposition of $H$ and a locally risk-minimizing strategy is then given by $(\zeta^H,V^H)$, where $V^H:=H_0 + \int_0^{\cdot}\zeta^H_udX_u + L^H$.

Under Assumption \ref{ass:MRP_simple}, we can explicitly derive decomposition \eqref{eq:FS_dec} for a generic discounted payoff $H$.
To this effect, let us define $\widehat{Z}:=\ccE(-\int_0^{\cdot}\lambda_udM_u)$ and assume that $\widehat{Z}$ is a strictly positive square-integrable martingale under $Q$. This enables us to define the minimal martingale measure $\widehat{Q}$ by $d\widehat{Q}=\widehat{Z}_TdQ$. 
We can then define the $\widehat{Q}$-martingale $\widehat{H}=(\widehat{H}_t)_{t\in[0,T]}$ by 
\[
\widehat{H}_t:=\widehat{E}[H|\cF_t],
\qquad\text{ for all }t\in[0,T],
\] 
where we denote by $\widehat{E}$ the expectation with respect to $\widehat{Q}$. By Bayes' formula, it holds that $\widehat{H}=N/\widehat{Z}$, with $N_t:=E[\widehat{Z}_TH|\cF_t]$, for all $t\in[0,T]$. As a consequence of Assumption \ref{ass:MRP_simple}, it holds that
\begin{equation}	\label{MRP2}
N = N_0 + \int_0^{\cdot}\theta_udW_u + \sum_{s_i\leq\cdot}\Delta N_{s_i},
\end{equation}
where $\theta\in L^2_{\rm loc}([0,T])$. 
We are now in a position to state the following theorem.

\begin{theorem}	\label{prop:LRM}
Suppose that Assumption \ref{ass:MRP_simple} holds and assume that $\widehat{Z}$ as defined above is a strictly positive square-integrable martingale under $Q$. Let $H$ be an $\cF_T$-measurable random variable and suppose that $\sup_{t\in[0,T]}\widehat{H}_t\in L^2(Q)$. Define the predictable process $\zeta^H=(\zeta^H_t)_{t\in[0,T]}$ by
\begin{equation}	\label{eq:LRM_strategy}
\zeta^H_t := \big(\widehat{Z}^{-1}_{t-}\eta_t^{-1}\theta_t+\widehat{H}_{t-}\lambda_t\big)\delta_{\cS^c}(t)
+ \frac{E[\Delta\widehat{H}_{t}\Delta M_{t}|\cF_{t-}]}{E[(\Delta M_{t})^2|\cF_{t-}]}\delta_{\cS}(t).
\end{equation}
If $\zeta^H\in\Theta$, then an $H$-admissible locally risk-minimizing strategy is given by $\varphi^H=(\zeta^H,V^H)$, where $V^H_t = \widehat{H}_t$, for all $t\in[0,T]$.
\end{theorem}
\begin{proof}
By the product rule, it holds that
\[
\widehat{H}_t = \widehat{Z}^{-1}_tN_t
= \widehat{H}_0 + \int_0^t\widehat{Z}_{u-}^{-1}dN_u + \int_0^tN_{u-}d\widehat{Z}_u^{-1} + [\widehat{Z}^{-1},N]_t,
\]
for all $t\in[0,T]$.
An application of It\^o's formula yields
\[
\widehat{Z}^{-1} = \ccE\left(\int_0^{\cdot}\lambda_u\eta_udW_u + \int_0^{\cdot}\lambda^2_u\eta^2_udu 
+ \sum_{s_i\leq\cdot}\frac{\lambda_{s_i}\Delta M_{s_i}}{1-\lambda_{s_i}\Delta M_{s_i}}\right).
\]
Therefore, in view of equation \eqref{MRP2}, we can compute
\begin{align*}
\widehat{H}_t 
&= \widehat{H}_0 
+ \int_0^t\widehat{Z}^{-1}_{u-}\theta_udW_u 
+ \int_0^tN_{u-}\widehat{Z}^{-1}_{u-}\lambda_u\eta_udW_u 
+ \int_0^tN_{u-}\widehat{Z}^{-1}_{u-}\lambda^2_u\eta^2_udu
+ \int_0^t\widehat{Z}^{-1}_{u-}\theta_u\lambda_u\eta_udu	\notag\\
&\quad+ \sum_{s_i\leq t}\left(\widehat{Z}^{-1}_{s_i-}\Delta N_{s_i}+N_{s_i-}\widehat{Z}^{-1}_{s_i-}\frac{\lambda_{s_i}\Delta M_{s_i}}{1-\lambda_{s_i}\Delta M_{s_i}}+\widehat{Z}^{-1}_{s_i-}\frac{\lambda_{s_i}\Delta M_{s_i}\Delta N_{s_i}}{1-\lambda_{s_i}\Delta M_{s_i}}\right)	\\
&= \int_0^t\widehat{Z}^{-1}_{u-}\big(\theta_u+N_{u-}\lambda_u\eta_u\big)\big(dW_u+\lambda_u\eta_udu\big)
+\sum_{s_i\leq t}\Delta\widehat{H}_{s_i}.
\end{align*}
Since $A^c=\int_0^{\cdot}\lambda_td\langle M\rangle_t^c=\int_0^{\cdot}\lambda_t\eta^2_tdt$ and $\{\Delta X\neq0\}\subseteq\Omega\times\cS$, we have that
\begin{equation}	\label{eq:LRM_proof}
H = \widehat{H}_T = \widehat{H}_0 + \int_0^T\zeta^H_udX_u + \sum_{s_i\leq T}\big(\Delta\widehat{H}_{s_i}-\zeta^H_{s_i}\Delta X_{s_i}\big),
\end{equation}
where $\zeta^H=(\zeta^H_t)_{t\in[0,T]}$ is defined as in \eqref{eq:LRM_strategy}.
We proceed to show that \eqref{eq:LRM_proof} provides the F\"ollmer-Schweizer decomposition \eqref{eq:FS_dec} of $H$, where $L^H:=\sum_{s_i\leq\cdot}(\Delta\widehat{H}_{s_i}-\zeta^H_{s_i}\Delta X_{s_i})$ is a square-integrable martingale strongly orthogonal to $M$ under $Q$.
To prove that $L^H$ is a martingale, it suffices to verify that $E[\Delta L^H_{s_i}|\cF_{s_i-}]=0$ a.s. for all $i=1,\ldots,M$. To this effect, using \eqref{eq:can_dec} we can compute
\begin{align*}
E[\Delta\widehat{H}_{s_i}|\cF_{s_i-}] - \zeta^H_{s_i}E[\Delta X_{s_i}|\cF_{s_i-}]
&= E[\Delta\widehat{H}_{s_i}|\cF_{s_i-}] - \zeta^H_{s_i}\Delta A_{s_i}	\\
&= E[\Delta\widehat{H}_{s_i}|\cF_{s_i-}] - \zeta^H_{s_i}\lambda_{s_i}E[(\Delta M_{s_i})^2|\cF_{s_i-}]	\\
&= E[(1-\lambda_{s_i}\Delta M_{s_i})\Delta\widehat{H}_{s_i}|\cF_{s_i-}]	\\
&= \widehat{E}[\Delta\widehat{H}_{s_i}|\cF_{s_i-}] = 0,
\end{align*}
where in the last step we used the fact that $1-\lambda_{s_i}\Delta M_{s_i}=\widehat{Z}_{s_i}/\widehat{Z}_{s_i-}$ and the $\widehat{Q}$-martingale property of $\widehat{H}$.
Square-integrability of $L^H$ under $Q$ follows from the assumptions.
Finally, $L^H$ and $M$ are strongly orthogonal under $Q$ if and only if $[L^H,M]$ is a $Q$-martingale. In turn, the latter property is equivalent to $E[\Delta L^H_{s_i}\Delta M_{s_i}|\cF_{s_i-}]=0$ a.s., for all $i=1,\ldots,M$. This can be shown to hold since
\begin{align*}
E[\Delta L^H_{s_i}\Delta M_{s_i}|\cF_{s_i-}]
&= E[\Delta\widehat{H}_{s_i}\Delta M_{s_i}|\cF_{s_i-}]
-\zeta^H_{s_i}E[\Delta X_{s_i}\Delta M_{s_i}|\cF_{s_i-}]	\\
&= E[\Delta\widehat{H}_{s_i}\Delta M_{s_i}|\cF_{s_i-}]
-\zeta^H_{s_i}E[(\Delta M_{s_i})^2|\cF_{s_i-}] = 0. \qedhere
\end{align*}
\end{proof}

Theorem \ref{prop:LRM} provides an explicit description of the locally risk-minimizing strategy for a generic payoff $H$. In particular, formula \eqref{eq:LRM_strategy} shows that the locally risk-minimizing strategy consists in a perfect replication at all times $t\in[0,T]\setminus\cS$, when the only active source of randomness is the Brownian motion $W$. The first term on the right-hand side of \eqref{eq:LRM_strategy} corresponds to the Delta-hedging continuous strategy. On the other hand, in correspondence of the expected jump dates $\cS=\{s_1,\ldots,s_M\}$, the strategy $\zeta^H_{s_i}$ is determined by a linear regression of $\Delta\widehat{H}_{s_i}$ onto $\Delta X_{s_i}$, conditionally on $\cF_{s_i-}$. Indeed, we have that
\begin{equation}	\label{eq:regr}
\zeta^H_{s_i} = \frac{\Cov(\Delta\widehat{H}_{s_i},\Delta X_{s_i}|\cF_{s_i-})}{\Var(\Delta X_{s_i}|\cF_{s_i-})},
\end{equation}
for all $i=1,\ldots,M$, as follows from \eqref{eq:LRM_strategy} using the predictability of the process $A$.
We also remark that the associated cost process $C(\varphi^H)$ is generated by the residuals of the regressions \eqref{eq:regr}.

\subsection{An example}\label{sec:LRM_example}

In this section, we illustrate the hedging approach described in Section \ref{sec:LRM} in the case of a forward-looking caplet using an RFR future as hedging instrument. This choice is motivated by the fact that, at the time of writing, SOFR futures represent the most liquidly traded products written on SOFR, while caps/floors are less liquid in the market.

We consider the model of Section \ref{sec:Hull_White}, with $Q$ playing now the role of the physical probability measure:
\begin{equation}	\label{eq:drho_P}
d\rho_t = (\alpha(t)+\beta\rho_t)dt + \sigma dW_t + dJ_t,
\end{equation}
where $J$ is defined as in \eqref{eq:process_J}, where the random variables $\{\xi_i\colon i=1,\ldots,M\}$ are independent and independent of $W$, with distribution $\cN(m_i,\gamma^2_i)$ under $Q$, for each $i=1,\ldots,M$.
For simplicity of presentation, in this subsection we assume that $\eta(dt)=dt$ (i.e., there are no roll-over dates).

As traded security, we consider a futures contract with reference period $[S,T]$, for some $S<T$. We denote by $f(t,S,T)$ the corresponding futures rate at date $t$, for $t\in[0,S]$, and define
\[
B(t,S,T) := \frac{B(T-t)-B(S-t)}{T-S}
= \frac{e^{\beta(T-t)}-e^{\beta(S-t)}}{(T-S)\beta},
\qquad \text{ for all }t\in[0,S].
\] 
We assume that the futures rate $f(\cdot,S,T)$ satisfies the following dynamics under $Q$:
\begin{equation}	\label{eq:future_dyn}
df(t,S,T) = h(t)dt + B(t,S,T)\sigma dW_t + B(t,S,T)d\widetilde{J}_t,
\end{equation}
where $\widetilde{J}$ denotes the compensated jump process defined as $\widetilde{J}_t:=J_t-\sum_{i=1}^m\ind{s_i\leq t}m_i$, for all $t\in[0,T]$, and $h:[0,T]\rightarrow\R$ is a bounded deterministic function.
The local martingale part $M$ of the discounted futures price process can be written as in \eqref{eq:MR}, with
\begin{equation}	\label{eq:future_vol}
\eta_t = (S^0_t)^{-1}B(t,S,T)\sigma
\qquad\text{and}\qquad
\Delta M_{s_i} = (S^0_{s_i})^{-1}B(s_i,S,T)(\xi_i-m_i),
\end{equation}
for all $t\in[0,S]$ and $i=1,\ldots,M$.

In the present setting, $\widehat{Z}:=\ccE(-\int_0^{\cdot}h(u)/(\sigma B(u,S,T))dW_u)$ is a square-integrable strictly positive martingale and, therefore, the minimal martingale measure $\widehat{Q}$ is given by $d\widehat{Q}=\widehat{Z}_TdQ$. 
By Girsanov's theorem, the process $\widehat{W}=(\widehat{W}_t)_{t\in[0,T]}$ defined by $\widehat{W}_t=W_t+\int_0^th(u)/(\sigma B(u,S,T))du$, for all $t\in[0,T]$, is a Brownian motion under $\widehat{Q}$.
Note that in the context of the present example the change of measure from $Q$ to $\widehat{Q}$ leaves invariant all the properties of the random variables $\{\xi_i\colon i=1,\ldots,M\}$.

\begin{remark}
In the context of the model of Section \ref{sec:Hull_White}, the futures rate $f(t,S,T)$ can be explicitly computed. Suppose that, in line with the market convention for 1-month RFR futures contracts, the futures contract settles at date $T$ at a rate quoted as $(R_T-R_S)/(T-S)$. By risk-neutral valuation under the minimal martingale measure $\widehat{Q}$, it holds that
\[
f(t,S,T) = \frac{\widehat{E}[R_T-R_S|\cF_t]}{T-S},
\qquad\text{ for all }t\in[0,S].
\]
Similarly as in Section \ref{sec:Hull_White}, under the minimal martingale measure $\widehat{Q}$ it holds that
\[
R_T-R_t 
= \rho_tB(T-t) + \widehat{A}(t,T) + \sigma\int_t^TB(T-s)d\widehat{W}_s + \sum_{i=1}^M\ind{s_i\in(t,T]}B(T-s_i)\xi_i,
\]
where $\widehat{A}(t,T) := \int_t^T\hat{\alpha}(s)B(T-s)ds$, for all $t\in[0,T]$, with $\hat{\alpha}(t):=\alpha(t)-h(t)/B(t,S,T)$ denoting the deterministic drift term in the dynamics of $\rho$ under $\widehat{Q}$.
The futures rate $f(t,S,T)$ admits then the following  representation:
\[
f(t,S,T) 
= \rho_tB(t,S,T)+\frac{\widehat{A}(t,T)-\widehat{A}(t,S)}{T-S} +\sum_{i=1}^M\ind{s_i\in(t,S]}B(s_i,S,T)m_i +\sum_{i=1}^M\ind{s_i\in(S,T]}\frac{B(T-s_i)}{T-S}m_i.
\]
\end{remark}

We suppose that the payoff $H$ to be hedged corresponds to an RFR caplet with discounted payoff
\[
H := (T-S)\bigl(F(S,T)-K\bigr)^+/S^0_T,
\]
for some $K>0$, as considered in Section \ref{sec:gaussian}. 
To determine the locally risk-minimizing strategy, we first need to compute the price process $\widehat{H}=(\widehat{H}_t)_{t\in[0,T]}$ of the payoff $H$ under the measure $\widehat{Q}$. 
This can be achieved by a direct application of Proposition \ref{prop:caplet_price}, leading to
\[
\widehat{H}_t = G(\rho_t,t,S,T,K)/S^0_t,
\]
where the function $G(\rho_t,t,S,T,K)$ is explicitly given in Proposition \ref{prop:caplet_price}, replacing $A(t,S)$ by $\widehat{A}(t,S)$ in the definition of the quantity $\Xi(S,T)$ and $\alpha(s)$ by $\hat{\alpha}(s)$ in the definition of $\Gamma_1(t,S)$.

In view of Theorem \ref{prop:LRM}, the component $\zeta^H$ of the locally risk-minimizing strategy $\varphi^H$ is determined by two terms: a first term representing the continuous Delta-hedging strategy and an additional term that takes into account the expected jump dates $\cS=\{s_1,\ldots,s_M\}$.

\begin{proposition}\label{prop:LRM_caplet}
Suppose that Assumption \ref{ass:MRP_simple} holds.
Consider a caplet delivering at date $T$ the payoff $(T-S)(F(S,T)-K)^+$, for $S\in[0,T]$ and $K>0$. 
The locally risk-minimizing strategy $\varphi^H=(\zeta^H,v^H)$ is determined by the process $\zeta^H=(\zeta^H_t)_{t\in[0,T]}$ defined by
\[
\zeta^H_t
= \zeta^{H,{\rm c}}_t\delta_{\cS^c}(t) + \zeta^{H,{\rm d}}_t\delta_{\cS}(t),
\qquad\text{ for all $t\in[0,T]$ }
\]
where, for all $i=1,\ldots,M$,
\begin{align}
\zeta^{H,{\rm c}}_t
&= \frac{-G(\rho_t,t,S,T,K)B(T-t)+B(T-S)e^{\beta(S-t)}P(t,S)\Phi(d_1(\rho_t))}{B(t,S,T)},\label{eq:LRM_caplet_c}\\
\zeta^{H,{\rm d}}_{s_i}
&= \frac{E[G(y+\xi_i,s_i,S,T,K)(\xi_i-m_i)]|_{y=\rho_{s_i-}}}{B(s_i,S,T)\gamma^2_i}.
\label{eq:LRM_caplet_d}
\end{align}
\end{proposition}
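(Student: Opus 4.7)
The plan is to specialize Proposition \ref{prop:LRM} to the discounted payoff $H=(T-S)(F(S,T)-K)^+/S^0_T$, with the futures contract as hedging instrument via \eqref{eq:future_vol}. By Proposition \ref{prop:caplet_price}, the $\widehat{Q}$-martingale value process is $\widehat{H}_t=G(\rho_t,t,S,T,K)/S^0_t$. Throughout, since $\eta(dt)=dt$ in this subsection, $B'(t,S)=B(S-t)$ and $B'(S,T)=B(T-S)$, and \eqref{eq:LRM_strategy} can be evaluated term by term.

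For the continuous piece $t\in\cS^c$, I would identify $\zeta^{H,{\rm c}}_t$ by matching the Brownian coefficient of $d\widehat{H}_t$ with that of $\zeta^H_t\,dX_t$; this is equivalent to simplifying $\widehat{Z}^{-1}_{t-}\eta_t^{-1}\theta_t+\widehat{H}_{t-}\lambda_t$ via the product rule on $\widehat{Z}\widehat{H}$. Since change of measure preserves diffusion coefficients, It\^o's formula applied to $G(\rho_t,t,S,T,K)/S^0_t$ on $\cS^c$ gives a Brownian coefficient $\sigma\,\partial_\rho G(\rho_t,t,S,T,K)/S^0_t$, and combined with $\eta_t=B(t,S,T)\sigma/S^0_t$ from \eqref{eq:future_vol} this forces $\zeta^{H,{\rm c}}_t=\partial_\rho G(\rho_t,t,S,T,K)/B(t,S,T)$. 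The computational heart of this step is to evaluate $\partial_\rho G$ in closed form: writing $G=P(t,S)\Phi(d_1)-K'\widetilde{P}\,\Phi(d_2)$ with $\widetilde{P}:=P(t,S)\exp(-\Xi(S,T)-B(T-S)(\rho e^{\beta(S-t)}+\Gamma_1(t,S)-B(T-S)\Gamma_2(t,S)/2))$, one verifies the Black--Scholes-type identity $P(t,S)\phi(d_1)=K'\widetilde{P}\phi(d_2)$ by expanding $d_1^2-d_2^2=(d_1-d_2)(d_1+d_2)$ with $d_1-d_2=B(T-S)\sqrt{\Gamma_2(t,S)}$ and substituting the definitions of $d_1$, $\widetilde{P}$ and $\Xi$. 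This makes the $\phi$-terms cancel, leaving $\partial_\rho G=-B(S-t)P(t,S)\Phi(d_1)+K'B(T-t)\widetilde{P}\,\Phi(d_2)$. Applying the elementary identity $B(T-t)=B(S-t)+B(T-S)e^{\beta(S-t)}$ together with the definition of $G$ rearranges this into $-G(\rho_t,t,S,T,K)B(T-t)+B(T-S)e^{\beta(S-t)}P(t,S)\Phi(d_1(\rho_t))$, giving \eqref{eq:LRM_caplet_c}.

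For the jump piece $t=s_i\in\cS$, I would evaluate the conditional regression coefficient in \eqref{eq:LRM_strategy} directly. Continuity of $S^0$ at $s_i$ yields $\Delta\widehat{H}_{s_i}=(G(\rho_{s_i-}+\xi_i,s_i,S,T,K)-G(\rho_{s_i-},s_i,S,T,K))/S^0_{s_i}$, while \eqref{eq:future_vol} gives $\Delta M_{s_i}=B(s_i,S,T)(\xi_i-m_i)/S^0_{s_i}$. Since $\xi_i$ is independent of $\cF_{s_i-}$ with $E[\xi_i]=m_i$ and $\Var(\xi_i)=\gamma_i^2$, the denominator $E[(\Delta M_{s_i})^2|\cF_{s_i-}]$ equals $B(s_i,S,T)^2\gamma_i^2/(S^0_{s_i})^2$; in the numerator the $G(\rho_{s_i-},\cdot)(\xi_i-m_i)$ contribution vanishes in conditional expectation, and the remaining term reduces to $(B(s_i,S,T)/(S^0_{s_i})^2)\,E[G(y+\xi_i,s_i,S,T,K)(\xi_i-m_i)]\big|_{y=\rho_{s_i-}}$. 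Taking the ratio produces \eqref{eq:LRM_caplet_d}.

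The main obstacle is the closed-form evaluation of $\partial_\rho G$: once the Black--Scholes-type identity $P(t,S)\phi(d_1)=K'\widetilde{P}\phi(d_2)$ and the telescoping $B(T-t)=B(S-t)+B(T-S)e^{\beta(S-t)}$ are in place, the remainder is a routine combination of It\^o's formula, the product rule, and elementary conditioning on $\cF_{s_i-}$.
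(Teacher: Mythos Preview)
Your proposal is correct and follows essentially the same route as the paper: specialize Proposition~\ref{prop:LRM}, identify the continuous component via $\partial_\rho G/B(t,S,T)$ using the Black--Scholes-type cancellation $P(t,S)\phi(d_1)=K'\widetilde{P}\,\phi(d_2)$ together with the telescoping identity $B(T-t)=B(S-t)+B(T-S)e^{\beta(S-t)}$, and handle the jump component by conditioning on $\cF_{s_i-}$ and exploiting the independence of $\xi_i$. The only point you omit is the closing verification that $\sup_{t\in[0,S]}\widehat{H}_t\in L^2(Q)$ and $\zeta^H\in\Theta$, which the paper dispatches briefly via the bound $\widehat{H}_t\leq P(t,S)$, Doob's maximal inequality, and boundedness of the integrands on $[0,S]$; you should add a sentence to that effect.
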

\begin{proof}
For brevity of notation, let us denote $G(x,t):=G(x,t,S,T,K)$, for all $(x,t)\in\R\times[0,S]$.
As follows from the proof of Theorem \ref{prop:LRM}, the first term on the right-hand side of \eqref{eq:LRM_strategy}, corresponding to $\zeta^{H,{\rm c}}$, is determined by the diffusive part of the process $\widehat{H}$. To this effect, we compute
\begin{align*}
\frac{\partial G(x,t)}{\partial x}
&= -B(S-t)G(x,t)+B(T-S)e^{\beta(S-t)}\bigl(e^{-xB(S-t)-\Xi(t,S)}\Phi(d_1(x))-G(x,t)\bigr)	\\
&= -G(x,t)B(T-t)+B(T-S)e^{\beta(S-t)}e^{-xB(S-t)-\Xi(t,S)}\Phi(d_1(x)).
\end{align*}
In view of equations \eqref{eq:drho_P} and \eqref{eq:future_vol}, it follows that the first component $\zeta^{H,{\rm c}}_t$ of the locally risk-minimizing strategy is given by \eqref{eq:LRM_caplet_c}.
To compute the second term on the right-hand side of equation \eqref{eq:LRM_strategy}, corresponding to $\zeta^{H,{\rm d}}_t$, observe that, in view of equation \eqref{eq:future_vol},
\[
\zeta^{H,{\rm d}}_{s_i}
= \frac{E[\Delta\widehat{H}_{s_i}\Delta M_{s_i}|\cF_{s_i-}]}{E[(\Delta M_{s_i})^2|\cF_{s_i-}]}
= \frac{E[\widehat{H}_{s_i}\Delta M_{s_i}|\cF_{s_i-}]}{E[(\Delta M_{s_i})^2|\cF_{s_i-}]}
= \frac{E[G(\rho_{s_i},s_i,S,T,K)(\xi_i-m_i)|\cF_{s_i-}]}{B(s_i,S,T)\Var(\xi_i|\cF_{s_i-})},
\]
for $i=1,\ldots,M$. Due to the independence of the random variables $\{\xi_i\colon i=1,\ldots,M\}$, it holds that
\[
\zeta^{H,{\rm d}}_{s_i}
= \frac{E[G(\rho_{s_i-}+\xi_i,s_i,S,T,K)(\xi_i-m_i)]}{(S^0_{s_i})^{-1}B(s_i,S,T)\gamma^2_i},
\qquad\text{ for all }i=1,\ldots,M,
\]
from which \eqref{eq:LRM_caplet_d} follows due to the independence of the random variables $\{\xi_i,i=1,\ldots,M\}$ from the Brownian motion $W$.
Finally, it remains to verify that $\sup_{t\in[0,S]}\widehat{H}_t\in L^2(Q)$ and $\zeta^L\in\Theta$. The first property can be shown to hold since $\widehat{H}_t\leq P(t,S)$ for all $t\in[0,S]$ and by means of standard estimates together with an application of Doob's maximal inequality. The fact that $\zeta^L\in\Theta$ follows by noting that the integral $E[\int_0^S(\zeta^H_u)^2d\langle M\rangle_u]$ can be reduced to the integration of continuous functions on the compact domain $[0,S]$ and that the function $h$ in \eqref{eq:future_dyn} is assumed to be bounded.
\end{proof}

\begin{remark}
In the context of a Vasi\v cek-type model, \cite{rutkowski2021pricing} derive an explicit replication strategy for a SOFR caplet based on SOFR futures. This is possible since their model is driven by a single source of randomness represented by a standard Brownian motion. In contrast, in our setting the presence of jumps at predetermined dates does not allow for perfect replication, thereby justifying the use of local risk-minimization.
\end{remark}

\begin{appendix}

\section{Technical proofs}

\begin{proof}[Proof of Proposition \ref{prop:Y}]
We start by computing the semimartingale characteristics $(B^Y,C^Y,\nu^Y)$ of the joint process $Y=(X,R)$. First, denoting by $B^{Y,c}$ the continuous part of the first characteristic $B^Y$, it holds that
\[
B^{Y,c}_t = \begin{pmatrix} B^{X,c}_t \\ \int_0^t\rho_sds\end{pmatrix}
= \int_0^t\biggl(\beta^Y_0(s) + \sum_{i=1}^{d+1}Y^i_{s-}\beta^Y_i(s)\biggr)ds,
\]
where $\beta^Y_0(s):=(\beta^X_0(s),\ell(s))$, $\beta^Y_i(s):=(\beta^X_i(s),\Lambda_i)$, for all $i=1,\ldots,d$, and $\beta^Y_{d+1}(s):=0$.
For the second characteristic $C^Y$, we have that 
\[
C^Y_t = \begin{pmatrix}C^X_t & 0 \\ 0 & 0\end{pmatrix}
=  \int_0^t\biggl(\alpha^Y_0(s)+ \sum_{i=1}^{d+1}Y^i_{s-}\alpha^Y_i(s)\biggr)ds,
\]
where
\[
\alpha^Y_i(s) := \begin{pmatrix}\alpha^X_i(s) & 0 \\ 0 & 0\end{pmatrix},
\text{ for all $i=0,1,\ldots,d$},
\quad\text{and}\quad
\alpha^Y_{d+1}(s):=0.
\]
The compensator $\nu^Y(dt,dx,dr)$ of the jump measure of the joint process $Y=(X,R)$ satisfies
\[
\nu^{Y,c}(dt,dx,dr) 
= \nu^{X,c}(dt,dx)\delta_{0}(dr) 
= \biggl(\mu^Y_0(t,dx,dr)+\sum_{i=1}^{d+1}Y^i_{t-}\mu^Y_i(t,dx,dr)\biggr)dt
\]
where
\[
\mu^Y_i(t,dx,dr) = \mu^X_i(t,dx)\delta_{0}(dr),
\text{ for all $i=0,1,\ldots,d$}, 
\quad\text{and}\quad
\mu^Y_{d+1}(t,dx,dr) = 0.
\]
Moreover, for all $(t,u,v)\in\R_+\times\mathcal{U}\times{\rm i}\R$,  \cite[Proposition II.1.17]{JacodShiryaev} together with \eqref{eq:SOFR_affine} and the fact that $X$ is an affine semimartingale implies that
\begin{align*}
&\delta_{\cT}(t)\int_{D\times\R}\bigl(e^{\langle u,x\rangle + vr}-1\bigr)\nu^Y(\{t\},dx,dr)	\\
&\quad= \delta_{\cT}(t)E\bigl[e^{\langle u,\Delta X_t\rangle+v\rho_t}-1|\cF_{t-}\bigr]	\\
&\quad= \delta_{\cT}(t)\left(e^{v(\ell(t)+\langle\Lambda,X_{t-}\rangle)}E[e^{\langle u+v\Lambda,\Delta X_t\rangle}-1|\cF_{t-}]+e^{v(\ell(t)+\langle\Lambda,X_{t-}\rangle)}-1\right)	\\
&\quad= \delta_{\cT}(t)\left(e^{v(\ell(t)+\langle\Lambda,X_{t-}\rangle)}\int_D(e^{\langle u+v\Lambda,x\rangle}-1)\nu^X(\{t\},dx)+e^{v(\ell(t)+\langle\Lambda,X_{t-}\rangle)}-1\right)	\\
&\quad= \delta_{\cT}(t)\left(e^{v\ell(t)+\gamma^X_0(t,u+v\Lambda)+\sum_{i=1}^dX^i_{t-}(v\Lambda_i+\gamma^X_i(t,u+v\Lambda))}-1\right).
\end{align*}
In turn, this leads to
\begin{align*}
&\int_{D\times\R}\bigl(e^{\langle u,x\rangle + vr}-1\bigr)\nu^Y(\{t\},dx,dr) 	\\
&= \delta_{\cT^c}(t)\int_{D}\bigl(e^{\langle u,x\rangle}-1\bigr)\nu^X(\{t\},dx)
+\delta_{\cT}(t)\int_{D\times\R}\bigl(e^{\langle u,x\rangle+ vr}-1\bigr)\nu^Y(\{t\},dx,dr) \\
&= \delta_{\cT^c}(t)\left(e^{\gamma^X_0(t,u)+\sum_{i=1}^dX^i_{t-}\gamma^X_i(t,u)}-1\right) 
+ \delta_{\cT}(t)\left(e^{v\ell(t)+\gamma^X_0(t,u+v\Lambda)+\sum_{i=1}^dX^i_{t-}(v\Lambda_i+\gamma^X_i(t,u+v\Lambda))}-1\right)	\\
&= e^{\gamma^Y_0(t,u,v)+\sum_{i=1}^{d+1}Y^i_{t-}\gamma^Y_i(t,u,v)}-1, 
\end{align*}
where
\begin{equation}		\label{eq:gammaY}	\begin{aligned}
\gamma_0^Y(t,u,v) &:= \delta_{\cT^c}(t)\gamma^X_0(t,u) 
+ \delta_{\cT}(t)\bigl(v\ell(t)+\gamma^X_0(t,u+\Lambda v)\bigr),	\\
\gamma_i^Y(t,u,v) &:= \delta_{\cT^c}(t)\gamma^X_i(t,u) 
+ \delta_{\cT}(t)\bigl(v\Lambda_i+\gamma^X_i(t,u+\Lambda v)\bigr),
\quad\text{ for all }i=1,\ldots,d,	\\
\gamma_{d+1}^Y(t,u,v) &:= 0.
\end{aligned}	\end{equation}
In particular, note that $\gamma^Y_i(t,u,v)=0$ for all $(t,u,v)\in(\R_+\setminus(\cT\cup J^X))\times\mathcal{U}\times{\rm i}\R$ and $i=0,1,\ldots,d+1$. It follows that the parameter set $(A^Y,\beta^Y,\alpha^Y,\mu^Y,\gamma^Y)$ is {\em good} in the sense of \cite[Definition 3.1]{keller-ressel2019}. 
Moreover, since the affine semimartingale $X$ is assumed to be infinitely divisible, \cite[Lemma 4.4]{keller-ressel2019} implies that, for all $t\in J^X$ and $i=0,1,\ldots,d$,
\[
\gamma^X_i(t,u) = \langle\tilde{\beta}^X_i(t),u\rangle + \frac{1}{2}\langle u,\tilde{\alpha}^X_i(t)u\rangle + \int_{D\setminus\{0\}}\bigl(e^{\langle x,u\rangle}-1-\langle h(x),u\rangle\bigr)\tilde{\mu}^X_i(t,dx),
\qquad\text{for all }u\in\mathcal{U},
\]
for suitable $\tilde{\beta}_i^X(t)\in\R^d$, $\tilde{\alpha}^X_i(t)\in\mathcal{S}^d$ and Borel measures $\tilde{\mu}^X_i(t,\cdot)$ on $D\setminus\{0\}$.
Making use of the notation $w=(u,v)\in D\times\R$ and $y=(x,r)$ and in view of \eqref{eq:gammaY}, this implies that 
\[
\gamma^Y_i(t,w) = \langle\tilde{\beta}^Y_i(t),w\rangle + \frac{1}{2}\langle w,\tilde{\alpha}^Y_i(t)w\rangle + \int_{(D\setminus\{0\})\times\R}(e^{\langle y,w\rangle}-1-\langle \tilde{h}(y),w\rangle)\tilde{\mu}^Y_i(t,dy),
\qquad\text{for all }w\in\mathcal{U}\times{\rm i}\R,
\]
for all $t\in \cT\cup J^X$ and $i=0,1,\ldots,d+1$, where we set
\begin{align*}
\tilde{\beta}^Y_0(t) &:= \begin{pmatrix} \tilde{\beta}^X_0(t) \\ (\ell(t)+\langle\tilde{\beta}^X_0(t),\Lambda\rangle+\int_D(\tilde{h}_{d+1}(\langle\Lambda,x\rangle)-\langle\Lambda,h(x)\rangle)\tilde{\mu}^X_0(t,dx))\delta_{\cT}(t)\end{pmatrix},	\\[1mm]
\tilde{\beta}^Y_i(t) &:= \begin{pmatrix} \tilde{\beta}^X_i(t) \\ (\Lambda_i+\langle\tilde{\beta}^X_i(t),\Lambda\rangle+\int_D(\tilde{h}_{d+1}(\langle\Lambda,x\rangle)-\langle\Lambda,h(x)\rangle)\tilde{\mu}^X_i(t,dx))\delta_{\cT}(t)\end{pmatrix},
\text{ for all }i=1,\ldots,d, \\[1mm]
\tilde{\alpha}^Y_i(t) &:= \begin{pmatrix}\tilde{\alpha}^X_i(t) & \tilde{\alpha}^X_i(t)\Lambda\delta_{\cT}(t) \\ \Lambda^{\top}\tilde{\alpha}^X_i(t)\delta_{\cT}(t) & \Lambda^{\top}\tilde{\alpha}^X_i(t)\Lambda\delta_{\cT}(t)\end{pmatrix}\in\mathcal{S}^{d+1},
\text{ for all }i=0,1,\ldots,d,\\
\tilde{\mu}^Y_i(t,dy) &=\tilde{\mu}^Y_i(t,dx,dr):=\tilde{\mu}^X_i(t,dx)\bigl(\delta_{\langle\Lambda,x\rangle}(dr)\delta_{\cT}(t)+\delta_0(dr)\delta_{\cT^c}(t)\bigr),
\end{align*}
with $\tilde{h}:\R^{d+1}\rightarrow\R^{d+1}$ being a truncation  function satisfying $\tilde{h}_i(y)=h_i(x)$, for all $i=1,\ldots,d$.

Moreover, we set $\tilde{\beta}^Y_{d+1}(t):=0$, $\tilde{\alpha}^Y_{d+1}(t):=0$ and $\tilde{\mu}^Y_{d+1}(t,dy):=0$ for all $t\in\R_+$. 
For all $i=0,1,\ldots,d$, the measure $\tilde{\mu}^Y_i(t,dy)$ is a L\'evy measure on $(D\setminus\{0\})\times\R$. This follows by observing that, as a consequence of Cauchy-Schwarz inequality,
\[
\int_{(D\setminus\{0\})\times\R}(1\wedge\|y\|^2)\tilde{\mu}^Y_i(t,dy)
\leq (1+\|\Lambda\|^2)\int_{D\setminus\{0\}}(1\wedge\|x\|^2)\tilde{\mu}^X_i(t,dx) <\infty,
\]
for all $t\in\cT$ and $i=0,1,\ldots,d$.
Since the affine semimartingale $X$ satisfies by assumption the conditions of \cite[Proposition 5.2]{keller-ressel2019}, its associated enhanced parameter set is admissible, in the sense of \cite[Definition 5.1]{keller-ressel2019}. In turn, this implies that the enhanced parameter set of $Y$, determined by $(\tilde{\beta}^Y,\tilde{\alpha}^Y,\tilde{\mu}^Y)$ as defined above, is also admissible.
Therefore, by \cite[Theorem 5.7]{keller-ressel2019}, on the canonical stochastic basis $(\Omega',\cF',(\cF'_t)_{t\geq0},\P')$ there exists an infinitely divisible Markov process $Y'=(Y'_t)_{t\geq0}$ with $Y'_0=(x,0)$ that is an affine semimartingale with characteristics $(B^Y,C^Y,\nu^Y)$ as computed above.
Since the two Markov processes $Y=(X,R)$ and $Y'$ have the same characteristics and the process $Y'$ is unique in law, it follows that $Y=(X,R)$ and $Y'$ have the same law (compare with \cite[Lemmata 10.1 and 10.2]{DuffieFilipovicSchachermayer}). 
Denoting by $E'$ the expectation under the measure $P'$, this implies that
\begin{equation}	\label{eq:CF_Y_proof}
E[e^{\langle w,Y_T\rangle}] 
= E'[e^{\langle w,Y'_T\rangle}] 
= e^{\Phi_0(T,w) + \langle\Psi_0(T,w),x\rangle},
\end{equation}
for all $w=(u,v)\in\mathcal{U}\times{\rm i}\R$ and $0\leq t\leq T<\infty$, where the functions $\Phi_0(T,w)$ and $\Psi_0(T,w)$ are solution to  \eqref{eq:RicattiY_1}-\eqref{eq:RicattiY_4}, as follows from \cite[Theorem 3.1]{keller-ressel2019} together with the specific structure of the characteristics $(B^Y,C^Y,\nu^Y)$ computed in the first part of the proof. The conditional version of the Fourier transform \eqref{eq:CF_Y} follows from \eqref{eq:CF_Y_proof} by relying on the Markov property of $Y$ on the stochastic basis $(\Omega,\cF,(\cF_t)_{t\geq0},P)$.
\end{proof}

\begin{prooflemma}{\ref{lemma:A1}}
	Firstly, taking expectation of equation \eqref{eq:335} immediately yields $m(t, T)$ and thus \eqref{mean1}. 
	Regarding the covariance, for the continuous part we note that
	\begin{align*}
		E\bigg[\int_t^{T_1} e^{\beta(T_1-u)} dW_u \cdot \int_t^{T_2} e^{\beta(T_2-v)}dW_v\bigg] = 
		\int_t^{T_1 \wedge T_2} e^{\beta(T_1 + T_2 - 2 u)} du = \frac{e^{\beta(T_1+T_2)}}{2 \beta}\Big( e^{-2\beta t} - e^{- 2 \beta(T_1 \wedge T_2)} \Big). 
	\end{align*}
	 Next, we compute the conditional covariance of the jumps:
	\begin{align*}
		\Cov \Big( & \sum_{i=1}^M \ind{s_i \in (t,T_1] } e^{ \beta (T_1-s_i)} \xi_i, \sum_{i=1}^M \ind{s_i \in (t, T_2]} e^{ \beta (T_2-s_i)} \xi_i \, \Big| \, \rho_t  \Big) 
		=\sum_{i=1}^M \ind{s_i \in (t,T_1 \wedge T_2] } e^{ \beta (T_1 + T_2-2s_i)} \Var(\xi_i).
	\end{align*}
	Putting the two parts together we obtain \eqref{mean1}.
\end{prooflemma}

\begin{proof}[Proof of Lemma \ref{lemma:A2,3}] By Fubini's theorem, we have that
\begin{align*}
E[R_T | \rho_t,R_t] &=R_t + \int_{(t,T]} m(t,u)  \eta(du) = R_t + \int_{(t,T]} m(t,u)  du + \sum_{j=1}^{N} \ind{t_j \in (t,T]} m(t, t_j).
\end{align*}
In view of \eqref{mean1}, the first integral on the right-hand side can be computed as follows:

	\begin{align*}
		\int_{(t,T]} m(t,u)  du & = \int_t^T \Big( \rho_t e^{\beta (u-t)} + a(t,u)  +\sum_{i=1}^M \ind{s_i \in (t,u]} m_i e^{\beta(u-s_i)} \Big) du \\
		& = \rho_t B(T-t) + A(t,T) + \sum_{i=1}^M \ind{s_i \in (t,T]} m_i B(T-s_i),
	\end{align*}
	while the second term is given by 
	\begin{align*}
		\sum_{j=1}^N \ind{t_j \in (t,T]} m(t, t_j) & = \sum_{j=1}^N \ind{t_j \in (t,T]} \Big( \rho_{t} e^{\beta(t_j-t)} + a(t,t_j) +
		\sum_{i=1}^M \ind{s_i \in (t,t_j]} m_i e^{\beta(t_j-s_i)} \Big). 
	\end{align*}
Regarding the conditional variance, we observe that by Fubini's theorem it holds that
\begin{align}
&\Var(R_T | \rho_t, R_t ) = \dinttT c(t,u,v) \eta(dv) \eta(du) \notag\\
&\quad =  \dinttT\left( \frac{\sigma^2e^{\beta(u+v)}}{2 \beta}\Big( e^{-2\beta t} - e^{- 2 \beta(u \wedge v)} \Big) + \sum_{i=1}^M\ind{s_i \in  (t,u \wedge v]} \gamma_i^2 e^{ \beta (u + v-2s_i)}\right) \eta(dv) \eta(du).
\label{eq:var0}
\end{align}	
We first compute
\[
\frac{\sigma^2e^{-2\beta t}}{2 \beta} \int_{(t,T]} \int_{(t,T]} e^{\beta(u+v)} \eta(dv) \eta(du) 
= \frac{\sigma^2e^{-2\beta t}}{2 \beta} \bigg( \int_{(t,T]}  e^{\beta u} \eta(du)\bigg)^2 
= \frac{\sigma^2}{2 \beta} B'(t,T)^2.
\]
Then, using again Fubini's theorem, we can compute
\begin{align}
\dinttT e^{\beta(u+v-2(u\wedge v))}\eta(dv)\eta(du)	
&= \int_{(t,T]}\int_{(t,u]}e^{\beta(u-v)}\eta(dv)\eta(du)
+ \int_{(t,T]}\int_{(u,T]}e^{\beta(v-u)}\eta(dv)\eta(du)	\notag\\
&= 2 \int_{(t,T]}\int_{(t,u)}e^{\beta(u-v)}\eta(dv)\eta(du)
+ \int_{(t,T]}\eta(\{u\})\eta(du),
\label{eq:var2}
\end{align}
Focusing on the first integral in \eqref{eq:var2}, we compute
\begin{align*}
\int_{(t,T]}\int_{(t,u)}e^{\beta(u-v)}\eta(dv)\eta(du)
&= \int_{(t,T]}\int_{(t,u)}e^{\beta(u-v)}dv\eta(du)
+ \sum_{j=1}^N\int_{(t,T]}\ind{t_j\in(t,u)}e^{\beta(u-t_j)}\eta(du)	\\
&= \frac{1}{\beta}\Bigl(B'(t,T)-(T-t)\Bigr)+\sum_{j=1}^N\ind{t_j\in(t,T]}\Bigl(B'(t_j,T)-\frac{1}{\beta}\Bigr)	.
\end{align*}
The last integral in \eqref{eq:var2} reduces to
\[
\int_{(t,T]}\eta(\{u\})\eta(du)
= \sum_{k=1}^N\int_{(t,T]}\ind{t_k=u}\eta(du)
= \sum_{j,k=1}^N\ind{t_j\in(t,T]}\ind{t_k=t_j}
= \sum_{j=1}^N\ind{t_j\in(t,T]}.
\]
Applying a reasoning analogous to \eqref{eq:var2}, we can rewrite as follows the second term in equation \eqref{eq:var0}, omitting to write the term $\gamma_i^2$ for simplicity of presentation:
\begin{align}
&\dinttT \sum_{i=1}^M\ind{s_i \in  (t,u \wedge v]}  e^{ \beta (u + v-2s_i)} \eta(dv) \eta(du)	\notag\\
&\quad= \int_{(t,T]}\int_{(t,u]} \sum_{i=1}^M\ind{s_i \in  (t,v]}  e^{ \beta (u + v-2s_i)} \eta(dv) \eta(du)
+ \int_{(t,T]}\int_{(u,T]} \sum_{i=1}^M\ind{s_i \in  (t,u]} e^{ \beta (u + v-2s_i)} \eta(dv) \eta(du)	\notag\\
&\quad = 2\int_{(t,T]}\int_{(t,u)} \sum_{i=1}^M\ind{s_i \in  (t, v]}  e^{ \beta (u + v-2s_i)} \eta(dv) \eta(du)
 + \int_{(t,T]} \sum_{i=1}^M\ind{s_i \in  (t,u]} e^{ 2\beta (u -s_i)} \eta(\{u\}) \eta(du). \notag
\end{align}
The first integral appearing in the last line can be computed as follows
\begin{align*}
&	\int_{(t,T]}\int_{(t,u)} \sum_{i=1}^M\ind{s_i \in  (t, v]} e^{ \beta (u + v-2s_i)} \eta(dv) \eta(du)	\\
&\quad = \sum_{i=1}^M\ind{s_i\in(t,T]}\frac{\bar{B}(s_i,T,2\beta)-B'(s_i,T)}{\beta}
+ \sum_{i=1}^M\sum_{j=1}^N\ind{s_i\in(t,T]}\ind{t_j\in[s_i,T]}e^{2\beta(t_j-s_i)}I(t_j,T),
\end{align*}
while the second integral reduces to
\begin{align*}
\int_{(t,T]} & \sum_{i=1}^M\ind{s_i \in  (t,u]} e^{ 2\beta (u -s_i)} \eta(\{u\}) \eta(du)
= \sum_{k=1}^N\sum_{i=1}^M\int_{(t,T]}\ind{s_i\in(t,u]}\ind{t_k=u}e^{ 2\beta (u -s_i)}\eta(du)	\\
&= \sum_{j,k=1}^N\sum_{i=1}^M\ind{t_j\in(t,T]}\ind{s_i\in(t,t_j]}\ind{t_k=t_j}e^{ 2\beta (t_j-s_i)}	
= \sum_{i=1}^M\sum_{j=1}^N\ind{s_i\in(t,T]}\ind{t_j\in[s_i,T]}e^{ 2\beta (t_j-s_i)}.
\end{align*}
\end{proof}

\end{appendix}

 \bibliographystyle{agsm}
 \bibliography{tsloc.bib}

\end{document}